\newcommand{\update}[1]{\textcolor{Black}{#1}}
\newtheorem{thm}{Theorem}[section]
\newtheorem{lem}[thm]{Lemma}
\newtheorem{prop}[thm]{Proposition}
\newtheorem{cor}[thm]{Corollary}
\newtheorem{claim}[thm]{Claim}
\newtheorem{definition}[thm]{Definition}
\newcommand{\ds}{\displaystyle}
\definecolor{darkred}{rgb}{1, 0.1, 0.3}
\definecolor{darkblue}{rgb}{0.1, 0.1, 1}
\definecolor{darkgreen}{rgb}{0,0.6,0.5}
\newcommand{\denselist}{\itemsep 0pt\parsep=1pt\partopsep 0pt}
\newcommand{\Image}		{\mathrm{Im}}
\newcommand{\Mmid}  {{M_{mid}}}
\DeclareMathOperator*{\LCA}{LCA}
\newcommand{\R}{\mathbb{R}}
\newcommand{\X}{\mathbb{X}}
\newcommand{\Rspace}{\mathbb{R}}
\newcommand{\Xspace}{\mathbb{X}}
\newcommand{\inv}{^{-1}}
\newcommand{\LMT}{\mathsf{LMT}}
\newcommand{\MT}{\mathsf{MT}}
\newcommand{\UM}{\mathsf{UM}}
\newcommand{\VM}{\mathsf{VM}}
\newcommand{\M}{\mathcal{M}}
\newcommand{\MM}{\mathcal{M}}
\newcommand{\TT}{\mathcal{T}}
\newcommand{\UU}{\mathcal{U}}
\newcommand{\depth}{\mathrm{depth}}
\DeclareMathOperator*{\argmin}{arg\,min}
\title{Intrinsic Interleaving Distance for Merge Trees}
\author{Ellen Gasparovic\thanks{Union College, gasparoe@union.edu}, Elizabeth Munch\thanks{Michigan State University, muncheli@msu.edu}, Steve Oudot\thanks{Inria Saclay, steve.oudot@inria.fr}, \\Katharine Turner\thanks{Australian National University, katharine.turner@anu.edu.au}, Bei Wang\thanks{University of Utah, beiwang@sci.utah.edu}, Yusu Wang\thanks{Ohio State University, yusu@cse.ohio-state.edu}}
\date{}
\begin{document}


\title{Intrinsic Interleaving Distance for Merge Trees}

%
%



\pagestyle{empty}

\maketitle 

\begin{abstract} 
Merge trees are a type of graph-based topological summary that track the evolution of connected components in the sublevel sets of scalar functions. They enjoy widespread applications in data analysis and scientific visualization. 
In this paper, we consider the problem of comparing two merge trees via the notion of interleaving distance in the metric space setting. 
We investigate various theoretical properties of such a metric. 
In particular, we show that the interleaving distance is intrinsic on the space of labeled merge trees and provide an algorithm to construct metric 1-centers for collections of labeled merge trees. 
We further prove that the intrinsic property of the interleaving distance also holds for the space of unlabeled merge trees. 
Our results are a first step toward performing statistics on graph-based topological summaries. 
\end{abstract}

\pagestyle{plain}

\section{Introduction}
\label{sec:introduction}

Many applications in science and engineering use scalar functions to describe and model their data. 
For example, atmospheric scientists compare simulated data from the Weather Research and Forecasting (WRF) Model with daily surface observations in weather forecasts, where both simulated and observed parameters (such as surface temperature, pressure, precipitation, and wind speed) can be modeled as scalar functions. 
We are interested in comparing scalar functions by comparing their topological summaries. 
There are several types of summaries constructed from topological methods, including vector-based summaries such as persistence diagrams~\cite{EdelsbrunnerLetscherZomorodian2002} and  barcodes~\cite{Ghrist2008}, as well as graph-based summaries such as merge trees, contour trees~\cite{CarrSnoeyinkAxen2003}, and Reeb graphs~\cite{Reeb1946}. 

The merge tree (sometimes referred to as a barrier tree~\cite{FlammHofackerStadler2002} or a join  tree~\cite{CarrSnoeyinkAxen2003}) for a given topological space $\Xspace$ \update{equipped with a continuous scalar function is a combinatorial construction that tracks the evolution of sublevel sets}.  
For a given function $f: \Xspace \to \Rspace$, the merge tree encodes the connected components of the sublevel sets $f\inv(-\infty,a]$ for $a \in \Rspace$. 
This construction is closely related to that of the Reeb graph~\cite{Reeb1946}, which analogously encodes connected components of the level sets $f\inv(a)$. The contour tree~\cite{CarrSnoeyinkAxen2003} is a special type of Reeb graph for a simply connected domain. 
Both merge trees and Reeb graphs are related to the level set topology through critical points of the scalar functions~\cite{Milnor1963}. 
 Furthermore, the mapper graph~\cite{SinghMemoliCarlsson2007}, which has found considerable success in applications, can be viewed as an approximation of a Reeb graph~\cite{CarriereOudot2018,MunchWang2016,CarriereMichelOudot2018}. 
 These constructions are referred to as graph-based summaries as the output object of study is always a graph $G$ equipped with an induced real-valued function $f:G \to \Rspace$.  
 They have appeared in many contexts and applications over the last few decades~\cite{WeberBremerPascucci2007,OesterlingHeineJaenicke2011,WidanagamaachchiJacquesWang2017}. 
 \update{Similar concepts also appeared within probability theory as trees created through excursion sets of random functions, and these trees are shown to be related to random branching processes (e.g.~\cite{LeGall1991,Evans2006})}. 

\paragraph{\update{Related work}}
Considerable recent effort has gone into understanding how to perform proper statistics on graph-based summaries.  
For instance, how does one define the mean of a collection of these objects? 
The first step toward answering this question is to determine a metric for the comparison of two summaries.  
This has been extensively studied recently with the creation of a veritable zoo of metric options for Reeb graphs and merge trees~\cite{Morozov2013,deSilva2016,Bauer2014, Bauer2015b,BauerDiFabioLandi2016,DiFabio2016,BauerLandiMemoli2017,Sridharamurthy2018, CarriereOudot2017, Beketayev2014}; see two recent surveys~\cite{YanMasoodSridharamurthy2021,BollenChambersLevine2021} and \cref{ssec:AvailableMetrics} for a discussion of some of these metrics.  
In particular, Carri\'ere and Oudot~\cite{CarriereOudot2017} have investigated whether some of these metrics are \emph{intrinsic} in the more general case of Reeb graphs; i.e.,~that the distance between two (close enough) graphs can be realized by a geodesic. 

In this paper, we continue the investigation into the intrinsic-ness of these metrics with the more narrow view of merge trees.  
The main distance we study is the interleaving distance.  
This metric was originally given in the context of persistence modules~\cite{Chazal2009b,Chazal2016} as a generalization of the bottleneck distance, and has been ported to merge trees~\cite{Morozov2013,Touli2018} and Reeb graphs~\cite{deSilva2016,Curry2014} via a category-theoretic viewpoint~\cite{Bubenik2014a, deSilva2018}.
When restricting ourselves from Reeb graphs to merge trees, we can actually work in an even more restrictive setting that has desirable theoretical properties, namely, labeled merge trees. 
In this case, we study a data triple: a merge tree $T$ with its function $f:T \to \R$, and a labeling $\pi:\{1,\cdots, n\} \to V(T)$ of its vertices, which at a minimum encompasses the leaves of $T$.  
The interleaving distance for labeled merge trees has been investigated in~\cite{Munch2018}, where it is shown that the metric can be naturally realized as the $L_\infty$-distance for a particular matrix construction.  
This \update{matrix construction} has already been discovered in the context of dendrograms \cite{Sokal1962} and phylogenetic trees \cite{Cardona2013}, where the objects of interest are closely related to merge trees.  
The phylogenetic tree literature, in particular, provides a wealth of other options for metrics~\cite{Robinson1979,Robinson1981, DasGupta1997,DasGupta1999,Diaconis1998,Billera2001,Bogdanowicz2013, Bogdanowicz2012e,Estabrook1985, Cardona2009,Choi2009,Lafond2019}. 
There has also been interest in that community for creating summaries of collections of phylogenetic trees~\cite{Miller2015, Markin2017, Gavryushkin2016}. 

These ideas are also closely related to those of ultrametrics, a strengthening of the triangle inequality for a metric into a requirement that $d(x,y) \leq \max\{d(x,z), d(z,y) \}$ (\update{for all $z$}). 
Independent of the phylogenetic tree work, there has been extensive interest in what is known as Gelfand's Problem from the ultrametric literature, that is, to describe all finite ultrametric spaces up to isometry using graph theory.  
The answer to this question is exactly a restriction of the labeled merge tree, although their literature never calls it such~\cite{Gurvich2012, Hughes2004, Lemin2003, Dovgoshey2016, Dovgoshey2018}.

\update{Furthermore, our work has close ties with the literature on consensus of classification~\cite{Leclerc1998}. In the language used therein, labeled merge trees belong to the class of valued classification trees, with the path-length distance being actually induced by the function values. Geodesic midpoints between two merge trees are called {\em medians} and defined as Fr\'echet means in the considered metric between trees. Finding medians is a special instance of the so-called consensus problem, and it is known to have an easily computable solution 
when the metric between merge trees is chosen to be the $\ell^\infty$-distance between their corresponding ultra matrices~\cite{ChepoiFichet}, as is the case in our work. By contrast, the problem is known to be NP-hard when other $\ell^p$-distances are put on the ultra matrices, typically when $p=1$ or~$2$~\cite{barthelemy1993median}, and in such situations one must resort to approximate solutions---e.g., mean-squares approximations in the case $p=2$~\cite{deSoete1983least,lapointe1997average}.}

\paragraph{Our contributions} In this paper, we prove that the interleaving distance is intrinsic on both the space of labeled merge trees as well as on the space of unlabeled merge trees.
Furthermore, in the labeled merge tree setting, we provide explicit procedures for constructing geodesics and metric 1-centers. \update{Our results mark important progress toward the goal of performing statistics on graph-based topological summaries to be used for topological data analysis and visualization. For instance, using results in this paper, Yan et al.~\cite{YanWangMunch2020}  computed geodesics of merge trees and their structural averages for ensemble analysis and  uncertainty visualization, and Curry et al.~\cite{CurryHangMio2021} utilized an estimation of the interleaving distance between unlabeled merge trees in order to classify and compare point cloud data.}

In Section~\ref{sec:Background}, we provide the necessary background on labeled merge trees and establish a correspondence between labeled merge trees and a particular class of matrices known as ultra matrices (Lemma~\ref{lem:UM_iso_LMT}). 
Then, in Section~\ref{sec:labeled}, we prove a stability result for the labeled interleaving distance $d_I^L$ (Lemma~\ref{lem:ultrammt-Lip}), which we use both to show that $d_I^L$ is \emph{strictly intrinsic} on the space of labeled merge trees (Corollary~\ref{cor:labelintrinsic}) as well as to construct 1-centers for collections of labeled merge trees in Section~\ref{subsec:1centers}.

Section~\ref{sec:unlabeled} focuses on unlabeled merge trees and the interleaving distance.
In particular, given two unlabeled merge trees, we show that the unlabeled interleaving distance between them is equal to the infimum over all finite labelings for the two trees of the labeled interleaving distance between them (Theorem~\ref{thm:LabeledVsUnlabeledInterleaving}). In fact, we show in Corollary~\ref{cor:OptLabel} that the infimum is always achieved. Section~\ref{sec:unlabeled} concludes with the result that the interleaving distance is intrinsic on the space of unlabeled merge trees (Corollary~\ref{cor:intrinsicinterleaving}). 
We end with a discussion of open problems and future work in Section~\ref{sec:discussion}.


\section{Background}
\label{sec:Background}

In this section, we give the basic definitions for our constructions of interest.  
We refer to \cref{fig:Roadmap} for an overview of notations.
For the entirety of the section, we fix $n$, and denote $\{1,\cdots,n\}$ by $[n]$ and isomorphism by $\cong$. 
\update{We note that some of the notions here appear in the literature under different names. For example, the concept of ultra matrix is the same as the one induced from the \emph{ultra network} proposed in \cite{SCM16}, both of which correspond to the distance matrix for a relaxed version of the ultrametric (see~\cref{def:relaxedUltrametric}). 
The concept of a merge tree is also the same as the \emph{tree gram} in \cite{SCM16}, both of which generalize the standard dendrogram. 
A dendrogram could be represented as an ultrametric, as shown by Jardine and Sibson~\cite{JardineSibson1971}, Hartigan~\cite{Hartigan1985}, Carlsson and M\'{e}moli~\cite{CM10}.}

\begin{figure}[!h]
    \centering
    $
    \begin{tikzcd}
    \text{Valid Matrices} = \VM \ar[dr, "\TT", bend left] \ar[dd, "\UU = \MM \TT"', bend left = 80]  \\
     & \LMT = \text{Labeled Merge Trees}  \ar[dl, "\MM", "\cong"', bend left  ] \ar[r] & \MT = \text{Merge trees} \\
    \text{Ultra Matrices} = \UM   \ar[uu, hook] 
    \end{tikzcd}
    $
    \caption{A roadmap of key notations.}
    \label{fig:Roadmap}
\end{figure}
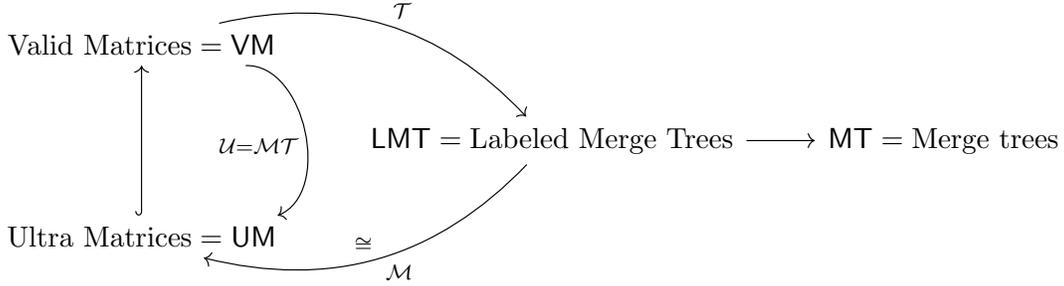

\subsection{Labeled Merge Trees} 

\update{First, we give the definition of a \emph{merge tree} (which we shall also refer to as an \emph{unlabeled merge tree} to contrast it with its labeled counterpart, defined subsequently) and related notions arising from the phylogenetic tree literature that we will make use of shortly. }

\begin{definition}
\label{def:mergeTree}
A \textbf{merge tree} is a  pair $(T,f)$ of a finite rooted tree $T$ with vertex set $V(T)$ and a function $f:V(T) \to \R\cup \infty$ such that adjacent vertices do not have equal function value, every non-root vertex has exactly one neighbor with higher function value, and the root (a degree one node) is the only point with the value $\infty$.
The space of merge trees is denoted $\MT$. 
\end{definition}
\noindent \update{Note that the topology on the space of merge trees is the one induced by viewing $\MT$ as a metric space with a given choice of metric (in this paper, we shall assume it is the interleaving distance, as defined later in this section).} We commonly call the function $f$ a \emph{height function}, the non-root vertices with degree 1 are called \emph{leaves}, and we let $depth(u)$ denote the largest height difference between the vertex $u$ in $T$ and any node in the subtree rooted at $u$. 
All merge trees under consideration in this paper are assumed to be   \emph{finite}.   

\update{
Note that in some cases --- for instance, when a merge tree is constructed from sublevel sets of input data given by a topological space with a function $g:\X \to \R$ --- we prefer to think of a merge tree as a continuous object. 
Specifically, a merge tree is constructed as a quotient space $\X/\sim$, with the equivalence relation $x \sim x'$ iff $g(x) = g(x') = a$ and $x$ and $x'$ are in the same connected component of $g\inv(\infty,a]$. 
In this case, with nice enough assumptions on the input data, a merge tree is a 1-dimensional stratified space equipped with a function, that is, its edges are homeomorphic to intervals rather than being combinatorial elements.} 
In this setting, replacing a merge tree edge $e = (u,v)$ with $f(u) < f(v)$ by a subdivision of that edge where the interior vertex $w$ satisfies $ f(u) < f(w) <f(v) $ does not change the inherent structure of the tree (sometimes such a tree is referred to as an \emph{augmented merge tree}). 
We consider two merge trees to be the same if one can be obtained from the other by a sequence of such subdivisions or the inverse operation. 

Furthermore, the merge tree structure induces a poset relation on the vertices of $T$. 
We say $v$ is an \emph{ancestor} of $w$ and write $v \succ w$ if the unique path from $v$ to $w$ strictly decreases in $f$. 
This occurs if and only if $w$ is in the subtree of $v$. 
We use $\LCA(v,w) \in T $ to mean the lowest common ancestor of $v$ and $w$ (or $\LCA_f(v,w)$ if the function needs to be emphasized), and $f(\LCA(v, w) )$ for its function value. 
We have $\LCA(v,v) = v.$
We abuse notation and write $\LCA(S)$ for the lowest common ancestor of any finite set $S \subset V(T)$.

\update{
Note that the merge tree as defined is closely related to the construction of a rooted, weighted tree. 
In fact, a merge tree induces a rooted weighted tree by putting the weight $f(u) - f(v)$ on each directed edge of the tree $T$. 
However, because of the function setting, the merge tree requirements are stricter since, for instance, a merge tree $(T,f)$ and its shift (i.e.,~translation) $(T,f+100)$ are considered different as merge trees but induce the same weighting. 
}

The merge tree structure provides a method for inducing a metric on the underlying tree vertices via the metric given by the length of the unique path between two points.
Note that there is a canonical weighting associated to any merge tree $(T,f)$, namely, $\omega(u,v) = |f(u) - f(v)|$ for any two adjacent vertices $u$ and $v$ in the tree. 
Furthermore, as paths are unique in a tree, we can define a metric for any pair of vertices by $\delta_T(u,v) = \sum \omega(e)$ for the edges in the path from $u$ to $v$. 

\update{We remind the reader that we use the terms \emph{merge tree} and \emph{unlabeled merge tree} interchangeably.} In~\cref{sec:labeled}, we will be focusing on \emph{labeled merge trees}, defined as follows.

\begin{definition}
\label{def:labeledMergeTree}
A \textbf{labeled merge tree} is a triple $(T,f,\pi)$ consisting of a merge tree $(T,f)$ along with a map $\pi:[n] \to V(T)$ that is surjective on the set of leaves.
When additional data are unnecessary or clear from context, we sometimes write $T$ for $(T,f,\pi)$.
The space of labeled merge trees is denoted $\LMT$.
\end{definition}

\update{Note that the topology on $\LMT$ comes from viewing it as a metric space with the labeled interleaving distance, as defined later in this section.} Analogous to the unlabeled case, we consider two labeled merge trees to be the same if one can be obtained via edge contractions or insertions that respect the function values and existing labels. 

\cref{def:labeledMergeTree} is closely related to that of a \textit{weighted, rooted $X$-tree} from the phylogenetic literature \cite{Semple2003}. 
Specifically, given a set $X$, an $X$-tree is a pair $(T,\phi)$ where $T$ is a tree and $\phi:X \to V(T)$ is a map so that every vertex of degree at most 2 is in the image. 
The difference is that such weighted graphs do not keep track of function values, so that two different labeled merge trees that induce the same weighting might be considered to be the same $X$-tree. 
Thus, a labeled merge tree can be thought of as a weighted, labeled $X$-tree (where $X = [n]$) with $f(u)$ specified for a subset of vertices $u$ that includes all leaves, and function values for the remaining vertices can be deduced from the weights on leaves. 

As with $X$-trees, labels for our merge tree are allowed to go to vertices that are not leaves; we essentially think of these as degenerate labeled leaves.
Furthermore, we do allow $\pi$ to be non-injectivite, so a vertex can have multiple labels. 
See \cref{fig:DegenerateLeaves} for an example with labels on degenerate leaves and vertices with more than one label. 

\begin{figure*}
    \centering
        \begin{minipage}{.18\textwidth}
    	$
    	\begin{pmatrix}
	a_1 & a_1 & a_4 & a_4 \\
	\cdot & a_1 & a_4 & a_4\\
	\cdot & \cdot & a_3 & a_3\\
	\cdot & \cdot & \cdot & a_2
	\end{pmatrix}$
    \end{minipage}
        \begin{minipage}{.2\textwidth}
        $$
        \begin{tikzcd}
         \phantom{x} \ar[r, bend left, "\TT"]  &  \ar[l, bend left, "\MM"] \phantom{x}
        \end{tikzcd}
        $$
    \end{minipage}
    \begin{minipage}{.2\textwidth}
    \includegraphics[width = \textwidth ] {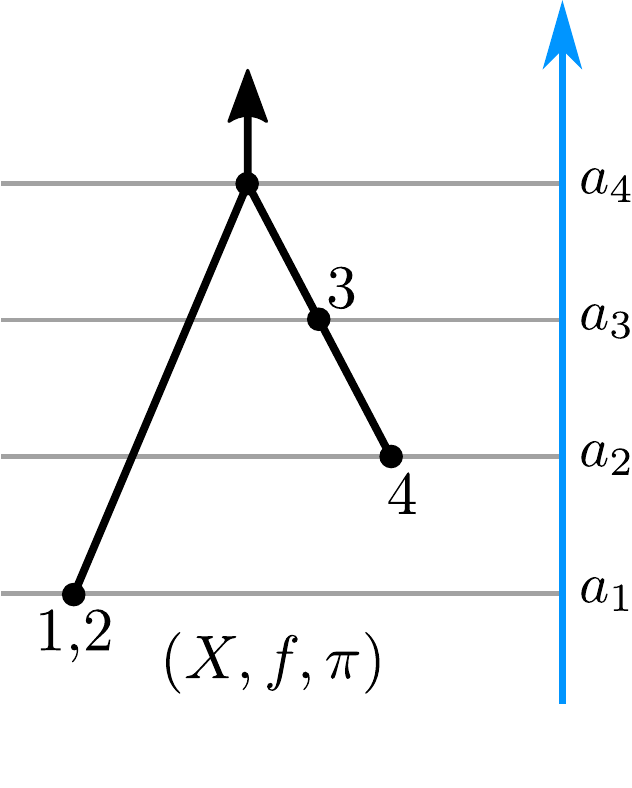}
    \end{minipage}
    \caption{An example of a labeled merge tree with two types of degenerate labels.  As all matrices used are symmetric, we only show the upper triangular portion.}
    \label{fig:DegenerateLeaves}
\end{figure*}

\subsection{Relating Merge Trees and Matrices}

In this section, we give the relationship between labeled merge trees and a particular class of matrices.  
Again, see \cref{fig:Roadmap} for an overview of notation. 

We begin with the traditional notion of an \emph{ultrametric} and our variant of it that relaxes one of the conditions, which will be closely related to our labeled merge trees.

\begin{definition}
\label{def:Ultrametric}
An \textbf{ultrametric} is a function $d:X \times X \to \R$ such that for any $x,y,z \in X$, 
\begin{itemize}\denselist
\item $d(x,y) \geq 0$ and is equal to 0 if and only if $x = y$,
\item $d(x,y) = d(y,x)$, and
\item $d(x,y) \leq \max\{d(x,z), d(z,y)\}$.
\end{itemize}
\end{definition}

\begin{definition}
\label{def:relaxedUltrametric}
A \textbf{relaxed ultrametric} is a function $d:X \times X \to \R$ such that for any $x,y,z\in X$,
\begin{itemize}\denselist
\item $d(x,y) = d(y,x)$, and
\item $d(x,y) \leq \max\{d(x,z), d(z,y)\}$.
\end{itemize}
\end{definition}

It is well known that ultrametrics satisfy the isosceles triangle property.
That is, for any triple $x,y,z$, at least two of $d(x,y)$, $d(y,z)$, and $d(x,z)$ must be equal.  
Otherwise, assume without loss of generality that $d(x,y)< d(y,z)<d(x,z)$, and then $d(x,z) \not \leq \max\{d(x,y),d(y,z) \}$. 
Note that this further implies that the pair that are equal must be at least as big as the third value, since $d(x,y)= d(y,z)<d(x,z)$ still violates the ultrametric property. 
Note that relaxed ultrametrics still satisfy the isosceles triangle property.  

When we have a set $X \cong [n]$, the information in a relaxed ultrametric can be stored as follows. 
\begin{definition}
A symmetric matrix $M \in \R^{n \times n}$ is called \textbf{valid} if $M_{ii} \leq M_{ij} $ for all $1 \leq i,j \leq n$.
A valid matrix $M$ is called \textbf{ultra} if 
$M_{ij} \leq \max \{ M_{ik}, M_{kj}\}$ for every $1\le k\le n$.
The spaces of valid and ultra matrices are denoted $\VM$ and $\UM$, respectively.
\end{definition}

\noindent In particular, a relaxed ultrametric on $[n]$ is represented by an ultra matrix. 
\update{
As with merge trees, we will endow $\VM$ and $\UM$ with the topology induced by the relevant metric, in this case, the $\ell^\infty$ distance between matrices, $\|M-M'\|_\infty = \max_{i,j} |m_{i,j}-m'_{i,j}|$.
}

Inspired by the cophenetic matrix construction of Cardona et al.~\cite{Cardona2013} that is studied in relation to merge trees in \cite{Munch2018}, there is a natural way to associate a matrix to a labeled merge tree as follows.

\begin{definition}
\label{def:cophVector}
The \textbf{induced matrix of a labeled merge tree} $(T,f,\pi)$, denoted $\MM(T,f,\pi) \in \R^{n \times n}$, is the matrix 
\begin{equation*}
\MM(T,f,\pi)_{ij} = f( \LCA( \pi(i),  \pi(j))). 
\end{equation*}
\end{definition}
\noindent See \cref{fig:DegenerateLeaves} for an example.

\begin{lem}
The induced matrix of a labeled merge tree is an ultra matrix.
That is, $\MM(T,f,\pi) \in \UM$ for $(T,f,\pi) \in \LMT$.
\end{lem}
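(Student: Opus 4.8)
The plan is to check the three defining conditions of an ultra matrix for $M := \MM(T,f,\pi)$ in turn: symmetry, the validity inequality $M_{ii} \le M_{ij}$, and the ultrametric-type inequality $M_{ij} \le \max\{M_{ik},M_{kj}\}$. The only facts I will use about the merge tree $(T,f)$ are the two already recorded: that $v \succeq w$ (meaning $v$ is an ancestor of $w$, or $v = w$) holds exactly when $w$ lies in the subtree rooted at $v$, and that $v \succeq w$ implies $f(v) \ge f(w)$ because $f$ strictly decreases along any ancestor-to-descendant path.

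Symmetry and validity are immediate. Since $\LCA(\pi(i),\pi(j)) = \LCA(\pi(j),\pi(i))$ we get $M_{ij} = M_{ji}$. And $M_{ii} = f(\LCA(\pi(i),\pi(i))) = f(\pi(i))$, while $\LCA(\pi(i),\pi(j))$ is by definition an ancestor of (or equal to) $\pi(i)$, so $M_{ij} = f(\LCA(\pi(i),\pi(j))) \ge f(\pi(i)) = M_{ii}$.

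For the ultra inequality, fix $i,j,k$ and set $p := \pi(i)$, $q := \pi(j)$, $r := \pi(k)$, and $u := \LCA(\{p,q,r\})$. Since $u$ is in particular a common ancestor of $p$ and $q$, we have $u \succeq \LCA(p,q)$ and hence $f(u) \ge f(\LCA(p,q)) = M_{ij}$. So it suffices to prove that $u = \LCA(p,r)$ or $u = \LCA(q,r)$, since then the corresponding entry $M_{ik}$ or $M_{jk}$ equals $f(u)$ and we obtain $\max\{M_{ik},M_{kj}\} \ge f(u) \ge M_{ij}$. I would establish this by contradiction. Both $\LCA(p,r)$ and $\LCA(q,r)$ are descendants of (or equal to) $u$, since $u$ is a common ancestor of each pair; if both were \emph{proper} descendants of $u$, then $p$ and $r$ would both lie in the subtree rooted at the child of $u$ on the path down to $\LCA(p,r)$, call it $c_1$, and similarly $q$ and $r$ would both lie in the subtree rooted at a child $c_2$ of $u$. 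Distinct children of $u$ have disjoint subtrees, and $r$ lies in the subtrees of both $c_1$ and $c_2$, so $c_1 = c_2$; but then $p,q,r$ all lie in the subtree rooted at the single vertex $c_1 \prec u$, contradicting that $u$ is their lowest common ancestor. Hence $u \in \{\LCA(p,r),\LCA(q,r)\}$, which completes the proof.

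There is no real obstacle here; the one point requiring a little care is the handling of the lowest common ancestor of a three-element set and the reduction to the children of $u$, which rests entirely on the disjointness of the subtrees hanging below $u$. Note that the argument uses no injectivity or leaf hypothesis on $\pi$, so it covers the degenerate cases $p = q$, $r \in \{p,q\}$, and $u \in \{p,q,r\}$ without modification. The statement is, in effect, that the cophenetic values $f(\LCA(\cdot,\cdot))$ of a rooted tree form a relaxed ultrametric — the tree-side manifestation of the isosceles triangle property discussed earlier.
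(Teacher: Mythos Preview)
Your proof is correct and follows essentially the same approach as the paper: verify validity via $f(\pi(i)) \le f(\LCA(\pi(i),\pi(j)))$, then for the ultra inequality introduce the triple $\LCA$ and argue that at least one of the two pairwise $\LCA$'s involving $\pi(k)$ coincides with it. The only cosmetic difference is in the combinatorial step: the paper observes that $\LCA(\pi(i),\pi(k))$ and $\LCA(\pi(j),\pi(k))$, being both ancestors of $\pi(k)$, must be comparable (and the higher one is the triple $\LCA$), whereas you reach the same conclusion via disjointness of the subtrees below distinct children of the triple $\LCA$.
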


\begin{proof}
Let $M = \MM(T,f,\pi)$ for $(T,f,\pi) \in \LMT$.
First, to check that it is a valid matrix, we see that $M_{ii}$ is simply the function value $f(\pi(i))$.
So, as $f(u) \leq f(LCA(u,v))$ by definition, we have
\begin{equation*}
M_{ii} = f(\pi(i)) \leq f( \LCA(\pi(i),\pi(j))) = M_{ij}.
\end{equation*}
\noindent To check that $M$ is an ultra matrix, let $u = \LCA(\pi(i),\pi(k))$, $v = \LCA(\pi(j),\pi(k))$, $w = \LCA(\pi(i),\pi(j), \pi(k))$.
This means that $u \preceq w$ and $v \preceq w$. 
If $u$ and $v$ are not comparable, then there are two distinct paths from $\pi(j)$ to each of them, and thus we have a loop $\pi(j)\to u\to w\to v\to \pi(j)$, contradicting the tree property of $T$. 
If $u$ and $v$ are comparable, assume without loss of generality that $u \preceq v$; then $v$ is a common ancestor for $\pi(i)$, $\pi(j)$, and $\pi(k)$, and thus $w \preceq v$. 
This implies $f(w) \leq f(v)$, and so for all $k$, 
\[
    M_{ij} \leq f(w) \leq f(v) = \max\{f(u),f(v)\} = \max\{M_{ik}, M_{jk} \}. \qedhere
    \]
\end{proof}

A valid matrix may be viewed as representing a function $f_M$ on a complete graph $K$ of $n$ vertices, with function value $M_{ii}$ defined on vertex $i$ and function value $M_{ij}$ defined on edge $(i, j)$. 
Note that because $M$ is a valid matrix, any sublevel set of the resulting function $f: K \to \Rspace$ satisfies the condition that every edge has equal or higher function value than either of its vertices. 
Given a valid matrix, one thus may obtain a labeled merge tree and subsequently an ultra matrix in the following way.

\begin{definition}
\label{def:lmt-valid}
\update{
Let $M \in \R^{n \times n}$ be a valid matrix, $K$ be a  complete graph over $n$ vertices, and $f_M:K \to \R$ be a function induced from $M$ with $f_M(v_i)=M_{ii}$ and $f_M((v_i, v_j))=M_{ij}$. 
The \textbf{labeled merge tree of a valid matrix} $M$, denoted as $\TT(M)$, is the labeled merge tree of the complete graph $K$ with the induced function $f_M$.
} 
\end{definition}

\update{
Basically, given a valid matrix $M$, we can consider $M$ to induce weights of a complete graph $K$ on $n$ vertices. We then compute a minimal spanning tree $\TT(M)$ of this complete graph based on the weights. 
The resulting tree $\TT(M)$ is the labeled merge tree of $M$.
It gives rise to an induced relaxed ultra matrix $\MM\TT(M)$ (recall~\cref{def:relaxedUltrametric}). This procedure corresponds to the maximal subdominant construction in \cite{ChepoiFichet}.}

Note that the labeling is inherited by including internally labeled vertices if there is any pair $i \neq j$ for which $M_{ii} = M_{ij}$.  
See \cref{fig:DegenerateLeaves} for a labeled tree containing an example where $M_{ii} = M_{ij} = M_{jj}$ creates a leaf with two labels, as well as an example where $M_{ii} = M_{ij} > M_{jj}$ creates an internal labeled vertex. 

\begin{lem}
\label{lem:UM_iso_LMT}
$\MM$ induces a \update{bijection} between labeled merge trees and ultra matrices. 
\end{lem}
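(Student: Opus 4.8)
The plan is to establish that $\MM$ and $\TT$ are mutually inverse maps between $\LMT$ and $\UM$. We already know from the previous lemma that $\MM$ maps $\LMT$ into $\UM$, and by construction $\TT$ takes a valid matrix to a labeled merge tree; since every ultra matrix is in particular valid (and, as we observe below, $\TT$ of an ultra matrix recovers that matrix), $\TT$ restricts to a map $\UM \to \LMT$. So it suffices to show (i) $\MM \circ \TT = \mathrm{id}_{\UM}$ and (ii) $\TT \circ \MM = \mathrm{id}_{\LMT}$, where in (ii) equality is understood up to the subdivision/contraction equivalence on labeled merge trees.

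First I would prove (i). Fix $M \in \UM$ and let $(T,f,\pi) = \TT(M)$. By definition, $T$ is the merge tree of the function on the complete graph $K_n$ with $f$-value $M_{ii}$ on vertex $i$ and $M_{ij}$ on edge $ij$, obtained via $a \mapsto \pi_0(f\inv(-\infty,a])$, with $\pi(i)$ the component of vertex $i$. The key point is that $f(\LCA(\pi(i),\pi(j)))$ equals the smallest threshold $a$ at which vertices $i$ and $j$ lie in the same sublevel-set component, i.e.~the minimax value $\min_{p} \max_{e \in p} M_e$ over paths $p$ from $i$ to $j$ in $K_n$ (a standard bottleneck/minimax-path fact). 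I then show this minimax value is exactly $M_{ij}$: the one-edge path gives $\le M_{ij}$, and the reverse inequality $M_{ij} \le \max_{e\in p} M_e$ for every path $p$ is precisely the iterated ultra inequality $M_{ij} \le \max\{M_{i k_1}, M_{k_1 k_2}, \ldots, M_{k_\ell j}\}$, proven by induction on the length of $p$ using $M_{ij}\le\max\{M_{ik},M_{kj}\}$. Hence $\MM(\TT(M))_{ij} = M_{ij}$ for all $i,j$.

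Next I would prove (ii). Start from $(T,f,\pi)\in\LMT$, set $M = \MM(T,f,\pi)$, and compare $(T',f',\pi') := \TT(M)$ with $(T,f,\pi)$. The cleanest route is to produce a height- and label-preserving isomorphism between the subtrees of $T$ and $T'$ spanned by the labeled vertices — more precisely, between the two trees after forgetting subdivision vertices that carry no label. One checks that the set of $f$-values arising as $\LCA$'s of labeled vertices, together with the nesting structure of the corresponding subtrees, is determined by $M$: for each value $c$, the partition of $[n]$ into blocks $\{i,j : M_{ij}\le c\}$ agrees on both sides (using part (i) applied to $M$, so that $\TT(M)$'s $\LCA$-values reproduce $M$), and the merge tree is, up to the equivalence, exactly the dendrogram recording how these partitions coarsen as $c$ increases. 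The treatment of degenerate labels — a leaf or internal vertex carrying several labels when $M_{ii}=M_{ij}$ — matches because $\TT$ was defined to insert internally labeled vertices in exactly those cases, as noted after its definition.

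The main obstacle I anticipate is (ii), specifically the bookkeeping around the subdivision/contraction equivalence and degenerate labels: one must be careful that a subdivision vertex of $T$ with no label simply does not appear in $T' = \TT(\MM(T))$ (since $\TT$ only creates vertices at $\LCA$-values of labels), so the isomorphism is only claimed after reducing both trees modulo the stated equivalence; and one must verify that an internal vertex of $T$ that happens to be $\pi(i)$ for some $i$ reappears in $T'$ because $M_{ii} = f(\pi(i))$ will coincide with $M_{ij}$ for any $j$ with $\pi(j)$ in the subtree of $\pi(i)$, triggering the internal-labeling clause of $\TT$. Part (i) is essentially the classical fact that minimax paths in a weighted complete graph realize the sublevel-set merge structure, so it should be routine.
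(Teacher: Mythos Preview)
Your plan is correct and part (i) is essentially the paper's surjectivity argument: the paper also shows $\MM(\TT(M))_{ij}=M_{ij}$ by observing $\leq$ from the single edge and arguing the reverse inequality, though it phrases the latter as a contradiction via the isosceles triangle property along a putative low path rather than your induction on path length using the ultra inequality --- these are the same computation.

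Part (ii), however, is a genuinely different route. The paper does \emph{not} prove $\TT\circ\MM=\mathrm{id}_{\LMT}$ directly; instead it proves injectivity of $\MM$ by passing to the associated tree metric $\delta_T(i,j)=2\MM(T)_{ij}-\MM(T)_{ii}-\MM(T)_{jj}$ and invoking the uniqueness theorem for weighted $[n]$-tree representations of tree metrics (Semple--Steel, Thm.~7.1.8), then recovering the function values from the diagonal. Your dendrogram/partition argument --- that both $T$ and $\TT(\MM(T))$ induce the same family of partitions $\{i\sim j: M_{ij}\le c\}$ of $[n]$, hence the same labeled merge tree up to subdivision --- is more elementary and self-contained, at the cost of the bookkeeping you flag around degenerate labels and the subdivision equivalence. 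The paper's approach is shorter because it outsources exactly that bookkeeping to the cited uniqueness theorem; yours avoids the external dependency. Both are sound, and your anticipated obstacle (ensuring unlabeled degree-2 vertices of $T$ disappear harmlessly and internal labels reappear via the $M_{ii}=M_{ij}$ clause) is real but manageable, since the surjectivity of $\pi$ onto leaves guarantees the partition data sees every branch of the tree.
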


\begin{proof}
We start with injectivity of $\MM$. 
From \cite[Def.~7.1.2]{Semple2003}, a metric $\delta$ is called a \emph{tree metric} if there exists a weighted $[n]$-tree (i.e., a weighted $X$-tree with $X=[n]$) $(T,f,\pi, \omega)$ for which  $\delta(i,j) = \sum_{e \in \gamma} \omega(e)$ for $\gamma$ the unique path from  $\pi(i)$ to $\pi(j)$ if $\pi(i) \neq \pi(j)$, and is $0$ otherwise. 
By \cite[Thm.~7.1.8]{Semple2003}, such a weighted $[n]$-tree representation is unique. 
For any $(T,f,\pi) \in \LMT$, we can construct a tree metric $\delta_T:[n] \times [n] \to \R_{\geq 0}$ uniquely from $\MM(T,f,\pi)$ by setting
\begin{equation*}
\delta_T(i,j) = 2\MM(T)_{ij} - \MM(T)_{ii} - \MM(T)_{jj}.
\end{equation*} 
To see this is a tree metric, observe that we first assign a length to each edge as the height difference of its two end points. For $i=j$, we are not traveling along any edges, and this formula gives $\delta_T(i,i)=0$, as desired. For $i\neq j$, we want the length of the path from $\pi(i)$ to $\pi(j)$. This path starts at $\pi(i)$ with height $\mathcal{M}(T)_{ii}$, moves upward to the least common ancestor of $\pi(i)$ and $\pi(j)$ at height $\mathcal{M}(T)_{ij}$, and moves downward to $\pi(j)$ at height $\mathcal{M}(T)_{jj}$. The formula combines  these two upward and downward paths  together.

So, given any $(T,f,\pi),(T',f',\pi') \in \LMT$, we construct the two tree metrics $\delta_T$ and $\delta_T'$. However, these two tree metrics are equivalent as $\delta_T'(x,y)\leq \delta_T(x,y)\leq 2\delta_T'(x,y)$. This implies that any continuity condition is the same under either choice of metric.

For ease of notation, denote $\MM(T,f,\pi)$ by $\MM(T)$; similarly for $\MM(T')$. 
If $\MM(T) = \MM(T')$, then $(T,f,\pi,\omega) = (T',f',\pi',\omega')$ as weighted $[n]$-trees by keeping the weighting but ignoring the function values and which vertex is the root.  
Since the function value of any labeled vertex can be determined by $\MM(T)_{ii}$, this implies that $T=T'$ as labeled merge trees. 

Next, we tackle surjectivity of $\MM$.
Given any ultra matrix $M$, we want a labeled merge tree $T$ for which $\MM(T) = M$. 
In particular, we will show that $T = \TT(M)$ satisfies this requirement, which further gives that $\TT$ is the inverse of $\MM$. 
To construct $\TT(M)$, let $K$ be the complete graph on $n$ vertices with vertices labeled $v_1,\cdots,v_n$.  
Define the map $s: K \to \Rspace$ on the complete graph $K$ by $s(v_i) = M_{ii}$ (vertex map) and $s(v_i,v_j) = M_{ij}$ (edge map).
Because $M$ is a valid matrix, this gives a well-defined map; in particular, $s(v_i) \leq s(v_i,v_j)$ for any $i \neq j$.

First, we check that the diagonal entries of the matrices $\MM(\TT(M))$ and $M$ agree. 
By definition of the construction of $\TT(M)$, there is a vertex $\pi(i)$ in the resulting tree with function value $f(\pi(i)) = s(v_i) = M_{ii}$, so clearly $\MM(\TT(M))_{ii} = f(\LCA(\pi(i),\pi(i))) = f(\pi(i)) = M_{ii}$.  

Finally, we check the off-diagonal entries, so assume $i \neq j$ and consider $M_{ij}$. 
Note that $\MM(\TT(M))_{i,j} = f(\LCA(\pi(i),\pi(j)))$ is exactly the function value for which the components containing $v_i$ and $v_j$ merge in the sublevel set  of $s: K \to \Rspace$ (see~\cite{EdelsbrunnerHarer2008} for a discussion on   sublevel set persistence).
Because $s(v_i,v_j) = M_{ij}$, this means that $\MM(\TT(M))_{i,j} \leq M_{ij}$.
Seeking a contradiction, assume that $\MM(\TT(M))_{i,j} < M_{ij}$. 
In order for the components with $v_i$ and $v_j$ to have merged before $M_{ij}$, there must be a path $\gamma = v_iu_1u_2\cdots u_kv_j$ for which every internal edge $e$ has $s(e)<M_{ij}$. 
By the isosceles property using the triangle $v_iv_ju_1$, we know that $s(v_i,v_j) = M_{ij}$ and $s(v_i,u_1)<M_{ij}$, so $s(v_j,u_1) = M_{ij}$. 
The same logic for triangle $u_1u_2v_j$ implies that $s(v_j,u_2) = M_{ij}$.  
Repeating this process for the entire path, we conclude finally that $s(v_j,u_{k-1}) = M_{ij}$. 
However, then the triangle $v_ju_{k-1}u_{k}$ has both $s(u_{k-1},u_{k})$ and $s(u_k,v_j)$ strictly less than $s(u_{k-1},v_j)$, contradicting the isosceles triangle property. 
Thus, we conclude that no such path exists, and therefore $\MM(\TT(M))_{ij} = M_{ij}$.  
\end{proof}

In the course of the above proof, we have showed that $\MM\TT$ is the identity when restricted to ultra matrices, but this is not the case when extending to only valid matrices. 
However, this construction does offer a method for turning a valid matrix into an ultra matrix. 

\begin{definition}
The \textbf{ultra matrix of a valid matrix} $M \in \VM$, denoted $\UU(M)$, is defined to be the induced matrix of $\TT(M)$.
That is, $\UU = \MM\TT$. 
\end{definition}

\subsection{Available Metrics}
\label{ssec:AvailableMetrics}

There are a number of metrics that may be defined on the space of (labeled) merge trees. 
Note that any metric defined on labeled merge trees can be extended to unlabeled merge trees by simply \update{ignoring} the labeling information, while likely turning the metric into a pseudometric.
In this paper, we focus on interleaving distance $d_I$ and labeled interleaving distance $d_I^L$. 
Other popular distances include the functional distortion distance $d_{FD}$ \cite{Bauer2014} and the bottleneck distance $d_B$.

\paragraph{Interleaving distance}
The interleaving distance is an idea arising from the generalization of the bottleneck distance for persistence diagrams to arbitrary persistence modules \cite{Chazal2009b}. 
Generalizations abound \cite{Bubenik2014a,MunchWang2016,deSilva2018}, but the analog for merge trees was first given in \cite{Morozov2013}. 
We give a modified (non-standard) formulation here, which was shown to be equivalent to the original \cite{Touli2018} \update{(see Theorem 7 of \cite{Touli2018}  for the statement and Appendix A of \cite{Touli2018} for its proof).}
    
\begin{definition}
\label{def:Metric_Interleaving}
Given two merge trees $(T,f),(T',f')$, a \textbf{$\delta$-good} map $\alpha: (T,f) \to (T',f')$ is a continuous map on the metric trees such that the following properties hold:
\begin{itemize}\denselist
\item[(i)] For any $x$ in the geometric realization $|T|$, $f'(\alpha(x)) - f(x) = \delta$;
\item[(ii)] For any $w\in \Image(\alpha)$ with $x' := \LCA(\alpha^{-1}(w))$, $f(x') - f(u) \le 2\delta$ for all $u \in \alpha \inv(w)$; and 
\item[(iii)] For any $w\notin \Image(\alpha)$, $depth(w) \le 2\delta$. 
\end{itemize}
The \textbf{interleaving distance} is then defined to be 
\begin{equation*}
d_I((T,f), (T',f')) = \inf \{\delta \mid \exists\, \delta\text{-good }\alpha:(T,f) \to (T',f') \}.
\end{equation*}
\end{definition}
 
One particularly useful property that we will use later is the following. 
\begin{lem}
\label{lem:GoodMapProperty}
Let $\alpha: (T,f) \to (T',f')$ be a continuous map such that $f'(\alpha(x))  = f(x)  + \delta$  for any $x \in |T|$.
Assume $u \preceq v$.  
Then 
\begin{itemize}
    \item $\alpha(u) \preceq \alpha(v)$, and 
    \item if $w$ is the unique ancestor of $\alpha(u)$ with $f'(w) = f(v) + \delta$, then $w = \alpha(v)$. 
\end{itemize}
\end{lem}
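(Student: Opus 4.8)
The plan is to exploit the defining property of $\alpha$ — that it shifts function values uniformly by $\delta$ — together with the fact that in a merge tree the sublevel sets are connected along monotone paths. For the first bullet, since $u \preceq v$, there is a unique path $\gamma$ in $|T|$ from $v$ down to $u$ along which $f$ strictly decreases. Applying $\alpha$ gives a continuous path $\alpha(\gamma)$ in $|T'|$ from $\alpha(v)$ to $\alpha(u)$, and because $f'(\alpha(x)) = f(x) + \delta$ for all $x$, the function $f'$ strictly decreases along $\alpha(\gamma)$ as well. In a tree, a path along which a merge-tree height function is strictly monotone must be the unique monotone path between its endpoints, so $\alpha(\gamma)$ witnesses $\alpha(v) \succeq \alpha(u)$; i.e., $\alpha(u) \preceq \alpha(v)$. (One small point to check: $\alpha(\gamma)$ is an honest monotone arc and not, say, a constant map — but strict monotonicity of $f'$ along it, inherited from strict monotonicity of $f$ along $\gamma$, rules that out and also forces it to be injective.)

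For the second bullet, let $w$ be the unique ancestor of $\alpha(u)$ with $f'(w) = f(v) + \delta$; such a $w$ exists and is unique because heights strictly increase from any vertex toward the root along the unique path to the root, so there is exactly one point on the $\alpha(u)$-to-root path at each height above $f'(\alpha(u)) = f(u) + \delta$, and $f(v) + \delta > f(u) + \delta$ since $u \preceq v$ forces $f(u) < f(v)$. Now from the first bullet we know $\alpha(u) \preceq \alpha(v)$, so $\alpha(v)$ is an ancestor of $\alpha(u)$; and $f'(\alpha(v)) = f(v) + \delta = f'(w)$. Two ancestors of the same point $\alpha(u)$ lying at the same height must coincide (again by strict monotonicity of $f'$ along the path from $\alpha(u)$ to the root, which makes the height-to-vertex correspondence injective on that path). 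Hence $w = \alpha(v)$.

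The only genuine subtlety — and the step I would be most careful about — is justifying that a continuous image of a strictly $f$-monotone arc under $\alpha$ is again a strictly $f'$-monotone \emph{arc} in $T'$, rather than something that backtracks. The resolution is purely formal: $f' \circ \alpha|_\gamma = f|_\gamma + \delta$ is strictly monotone as a function of the arc-length parameter on $\gamma$, hence injective, hence $\alpha|_\gamma$ itself is injective, and its image is an embedded arc in the tree $|T'|$ joining $\alpha(u)$ to $\alpha(v)$ along which $f'$ strictly increases; in a tree this arc is forced to be the unique reduced path between its endpoints. Everything else is bookkeeping with the poset relation $\succeq$ and the observation that $f'$ is injective along any root-path, which is immediate from Definition~\ref{def:mergeTree} (adjacent vertices have distinct values and every non-root vertex has exactly one higher neighbor, so heights increase monotonically toward the root).
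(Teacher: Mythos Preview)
Your proof is correct and follows essentially the same approach as the paper: take the unique $f$-monotone path $\gamma$ from $u$ to $v$, observe that $f'\circ\alpha$ is monotone along $\gamma$ because $f'\circ\alpha = f + \delta$, and conclude both bullets from the resulting ancestor relation and the injectivity of heights along the root-path. You are a bit more explicit than the paper about why $\alpha(\gamma)$ is an embedded (injective) arc, but this is just filling in a detail the paper leaves implicit.
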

\begin{proof}
Note that $u \preceq v$ implies that $f(u) \leq f(v)$ and further that the unique path $\gamma$ from $u$ to $v$ in $T$ is monotone increasing in $f$. 
Then the image of $\gamma$ in $T'$, $\alpha(\gamma)$, satisfies $f'(\alpha(\gamma(t))) = f(\gamma(t)) + \delta$ and thus is monotone increasing in $f'$. 
Thus, by definition, we have that $\alpha(u) \preceq \alpha(v)$. 
Further, the uniqueness of paths implies that if $w$ is the unique ancestor with $f'(w) = f(v) + \delta$, then it must be the endpoint of $\gamma$, and so $w = \alpha(v)$. 
\end{proof}

\paragraph{Labeled interleaving distance}
The following metric is closely related to one originally defined in \cite{Cardona2013} for comparing phylogenetic trees.

\begin{definition}
\label{def:Metric_LabeledInterleaving}
\update{Given two labeled merge trees sharing the same set of $n$ labels}, the \textbf{labeled interleaving distance} is 
\begin{equation*}
d_I^L((T,f,\pi), (T',f',\pi')) = \|\MM(T,f,\pi) -  \MM(T',f',\pi') \|_\infty.
\end{equation*}
\end{definition}

\noindent The reason for calling such a distance an interleaving distance comes from \cite{Munch2018} where it is shown that this metric arises as an interleaving distance on a particular category with a flow \cite{deSilva2018}. 
Note that because we need the labels in order to be able to have a well-defined matrix, this metric only works on labeled merge trees. 

\subsection{Intrinsic Metrics}
Given a metric $d$ on merge trees, we may define its intrinsic version as follows; see, e.g., \cite{Burago2001}. 

\begin{definition}
\label{def:Metric_IntrinsicBottleneck}
Given two merge trees, let $\gamma:[0,1]\rightarrow \MT$ be a continuous path in $d$ 
such that $\gamma(0)=T$ and $\gamma(1)=T'$. 
The \textbf{length of $\gamma$ induced by the distance $d$} is defined as 
\begin{equation*}
L_{d}(\gamma)=\ds \sup_{n,\sum}\;\sum_{i=0}^{n-1} d(\gamma(t_i),\gamma(t_{i+1})),
\end{equation*}
where $n$ ranges over $\mathbb{N}$ and $\sum$ ranges over all partitions $0=t_0\leq t_1\leq \ldots \leq t_n=1$ of $[0,1]$. 
The \textbf{intrinsic metric $\hat{d}$ induced by the distance $d$} is 
\begin{equation*}
\hat{d}(T,T')=\ds\inf_\gamma L_{d}(\gamma).
\end{equation*}
\end{definition}

\noindent Thus, the induced intrinsic metric on a metric space is the infimum of the lengths of all paths from one point to another. It is known that $d$ is always less than or equal to $\hat{d}$. 

A metric space is said to be a \emph{length space} if the original metric $d$ coincides with the intrinsic metric $\hat{d}$. Recall that a metric space is said to be a \emph{geodesic space} if any two points in the space can be connected by a curve of length equal to the distance between
the two points.
In this case, the metric is said to be \emph{strictly intrinsic}.
 Note that a geodesic space is necessarily a length space.


\section{Geodesics and 1-Centers for Labeled Merge Trees}
\label{sec:labeled}

In this section, we prove an inequality involving the labeled interleaving distance and provide methods for constructing geodesics and 1-centers for collections of labeled merge trees. 

\subsection{More on the Labeled Interleaving Distance}

\update{ 
The following result is not new. It follows from Theorem 2 of \cite{SCM16}, which is slightly more general than the lemma below in the sense that the matrices $M$ and $M'$ are allowed to be non-valid as well. It is also a slight generalization of Lemma 15 of \cite{CM10} (which is the following result restricted to the metric setting). It can also be deduced from  Proposition~1 and Corollary~1 of \cite{ChepoiFichet} because the set of relaxed ultra matrices (which are the matrices corresponding to merge trees) is stable under translations along the diagonal. See Section~3 of \cite{ChepoiFichet} for the special case of ultra matrices (equivalently, labeled dendrograms), which adapts straightforwardly to our slightly more general setting. Neverthelss, here we provide simple and direct proofs, both for completeness and clarity. 
}

\begin{lem}
\label{lem:ultrammt-Lip}
For any pair of valid matrices $M, M' \in \VM$, 
\begin{equation*}
d_I^L(\TT(M),\, \TT(M')) \leq \|M-M'\|_\infty.
\end{equation*}
\end{lem}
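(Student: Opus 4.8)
The plan is to unwind the definition of $d_I^L$ and reduce the statement to a Lipschitz property of the operator $\UU = \MM\TT \colon \VM \to \UM$. By Definition~\ref{def:Metric_LabeledInterleaving}, $d_I^L(\TT(M),\TT(M')) = \|\MM\TT(M) - \MM\TT(M')\|_\infty = \|\UU(M) - \UU(M')\|_\infty$, so it is enough to show that $\|\UU(M) - \UU(M')\|_\infty \le \|M-M'\|_\infty$; that is, $\UU$ is $1$-Lipschitz for $d_\infty$.

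The first real step is to give a combinatorial ``minimax path'' formula for the entries of $\UU(M)$. Following the construction of $\TT(M)$, view $M$ as the function $s$ on the complete graph $K$ on $n$ vertices with $s(v_i) = M_{ii}$ and $s(v_i,v_j) = M_{ij}$, and write $M_e := M_{kl}$ for an edge $e=(v_k,v_l)$. I claim that $\UU(M)_{ii} = M_{ii}$ and, for $i \ne j$,
\[
    \UU(M)_{ij} \;=\; \min_{\gamma}\ \max_{e \in \gamma} M_e ,
\]
where $\gamma$ ranges over all paths from $v_i$ to $v_j$ in $K$. Indeed, $\UU(M)_{ij} = \MM\TT(M)_{ij} = f(\LCA(\pi(i),\pi(j)))$ in $\TT(M)$, which by construction (via $a \mapsto \pi_0(s\inv(-\infty,a])$) is the least value $a$ at which $v_i$ and $v_j$ lie in a common connected component of the sublevel set, i.e.\ the least $a$ admitting a path from $v_i$ to $v_j$ all of whose edges have value $\le a$; validity of $M$ makes the vertex conditions along such a path automatic, since every path has at least one edge incident to $v_i$ (resp.\ $v_j$) and hence value $\ge M_{ii}$ (resp.\ $\ge M_{jj}$). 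The diagonal case is immediate, since the trivial path gives $\UU(M)_{ii} = f(\pi(i)) = M_{ii}$. Much of this identification already appears in the proof of Lemma~\ref{lem:UM_iso_LMT} (the statement that $f(\LCA(\pi(i),\pi(j)))$ is the merge value of the components of $v_i$ and $v_j$), so I would cite it and add only the standard single-linkage remark that this merge value equals the minimax value above, rather than repeat the argument in full.

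With the formula in hand, the Lipschitz bound is a one-line monotonicity argument. Set $\varepsilon = \|M-M'\|_\infty$, so that $|M_e - M'_e| \le \varepsilon$ for every edge $e$ and $|M_{ii}-M'_{ii}| \le \varepsilon$ for every diagonal entry. For $i \ne j$ and any path $\gamma$ we have $\max_{e\in\gamma} M_e \le \max_{e\in\gamma} M'_e + \varepsilon$; minimizing over $\gamma$ yields $\UU(M)_{ij} \le \UU(M')_{ij} + \varepsilon$, and exchanging the roles of $M$ and $M'$ gives $|\UU(M)_{ij} - \UU(M')_{ij}| \le \varepsilon$. For the diagonal, $|\UU(M)_{ii} - \UU(M')_{ii}| = |M_{ii}-M'_{ii}| \le \varepsilon$. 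Hence $\|\UU(M)-\UU(M')\|_\infty \le \varepsilon = \|M-M'\|_\infty$, as desired. The one genuinely delicate point is the minimax characterization of $\UU(M)_{ij}$ — the correspondence between connected components of sublevel sets and minimax path values — where one must handle carefully the interplay between vertex and edge values and the possibility of repeated matrix entries creating degenerate (internally labeled) vertices; everything after that is routine.
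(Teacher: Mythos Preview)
Your proposal is correct and follows essentially the same approach as the paper. Both proofs reduce to the $1$-Lipschitz property of $\UU$ and use that $\UU(M)_{ij}$ is the sublevel-set merge value of $v_i$ and $v_j$ (equivalently, your minimax path value); the paper phrases this via ``there is a path $\gamma$ with $s(e)\le \widetilde M_{ij}$, hence $s'(e)\le \widetilde M_{ij}+\delta$, hence $\widetilde M'_{ij}\le \widetilde M_{ij}+\delta$,'' which is exactly your monotonicity argument without first isolating the minimax formula as a separate claim.
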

\begin{proof}
Since, by definition, 
$d_I^L(\TT(M),\, \TT(M')) = \|\UU(M)-\UU(M')\|_\infty$, 
we will actually establish the inequality $\|\UU(M)-\UU(M')\|_\infty \leq \|M-M'\|_\infty$.

Let $\delta = \|M-M'\|_\infty$. 
Let $T=\MM(M)$ and $T'=\MM(M')$ be the associated merge trees, and $\widetilde M = \UU(M)$ and $\widetilde M' = \UU(M')$ the induced ultra matrices. 
Consider any pair of (possibly equal) labels $i$ and $j$ with $1\leq i\leq j\leq n$.
We consider the vertices $v_i$ and $v_j$ in the complete graph $K$ with $s, s':K \to \R$ the maps on $K$ induced by $M$ and $M'$, respectively. 
As $v_i$ and $v_j$ are in the same component of the $(M_{ij})$-sublevel set of $s$ (i.e.,~sublevel set of $s$ at value $M_{ij}$, $s\inv(\infty, M_{ij}]$), there is a path $\gamma$ in $K$ with $s(e) \leq \widetilde M_{ij}$ for all edges $e$ in the path. 
Because $\|M-M'\|_\infty\leq \delta$, we have that 
\begin{equation*}
    s'(e) \leq s(e) + \delta \leq \widetilde{M_{ij}} + \delta
\end{equation*}
for every $e \in \gamma$. 
So, $v_i$ and $v_j$ are in the same component of the $(\widetilde M_{ij} + \delta)$-sublevel set of $s'$ and thus $\widetilde{M'}_{ij} \leq \widetilde M_{ij} + \delta$. 

Symmetrically, for any $t<\widetilde M_{ij}-\delta$, $v_i$ and $v_j$ do not lie in the same connected component of the $t$-sublevel set of $s'$. 
Otherwise, by the same argument as above, $v_i$ and $v_j$ would belong to the same connected component of the $(t+\delta)$-sublevel set of $T$ with $t+\delta < \widetilde M_{ij}$, a contradiction. 
Hence, $\widetilde M'_{ij} \geq \widetilde M_{ij} - \delta$. It follows that $|\widetilde M'_{ij} - \widetilde M_{ij}|\leq \delta$, and since this is true for all labels $1\leq i\leq j\leq n$, the symmetric matrices $\widetilde M, \widetilde M'$ satisfy $\|\widetilde M-\widetilde M'\|_\infty \leq \delta$. 
Hence, $d_I^L(T, T') = \|\widetilde M-\widetilde M'\|_\infty \leq \delta$.
\end{proof}

\subsection{Geodesics in $\LMT$}

The next corollary looks at the straight line between the matrices associated to two labeled merge trees. 
Specifically, given any two labeled merge trees $T, T' \in \LMT$, we know that their associated matrices $M = \MM(T), M' = \MM(T')$ are ultra matrices. 
We can define the line between them by setting $M^\lambda := (1-\lambda)M + \lambda M'$ for $\lambda \in [0,1]$. 
While not necessarily ultra matrices, it is easy to check that $M^\lambda \in \VM$ for all $\lambda \in [0,1]$. 
We can then pull this back to a path of labeled merge trees by setting $T^\lambda = \TT(M^\lambda)$.

\begin{cor}[LMT Geodesics]
\label{cor:labelintrinsic}
Given any two labeled merge trees $T, T' \in \LMT$, and their corresponding 
ultra matrices $M = \MM(T), M' = \MM(T')$, the family of merge trees $\left\{T^\lambda:=\TT\left(M^\lambda\right)\right\}_{\lambda\in[0,1]}$ defines a geodesic between $T$ and $T'$ in the metric $d_I^L$. 
As a consequence, on the space of labeled merge trees, the metric $d_I^L$ is strictly intrinsic. 
\end{cor}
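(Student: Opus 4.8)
The plan is to use the stability result of Lemma~\ref{lem:ultrammt-Lip} to show that the path $\lambda \mapsto T^\lambda$ has length at most $d_I^L(T,T')$, and then to combine this with the general fact that the intrinsic metric dominates the original metric ($d \le \hat d$) to conclude that the length of the path equals the distance, making it a geodesic.

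First I would record the endpoints and well-definedness. As noted in the text, each convex combination $M^\lambda = (1-\lambda)M + \lambda M'$ is a valid matrix, since the condition $M^\lambda_{ii}\le M^\lambda_{ij}$ is preserved under convex combinations of valid matrices; hence $T^\lambda = \TT(M^\lambda)\in\LMT$ is defined for all $\lambda\in[0,1]$. Moreover, since $M = \MM(T)$ and $M' = \MM(T')$ are ultra matrices, the proof of Lemma~\ref{lem:UM_iso_LMT} shows that $\TT$ is the inverse of $\MM$ on $\UM$, so $T^0 = \TT(\MM(T)) = T$ and $T^1 = T'$.

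Next, for any $0\le \lambda\le \mu\le 1$, Lemma~\ref{lem:ultrammt-Lip} gives
\[
d_I^L(T^\lambda, T^\mu) = d_I^L(\TT(M^\lambda),\TT(M^\mu)) \le \|M^\lambda - M^\mu\|_\infty = |\mu-\lambda|\,\|M-M'\|_\infty,
\]
so $\lambda\mapsto T^\lambda$ is continuous in $d_I^L$ and is therefore an admissible path $\gamma$ in the sense of Definition~\ref{def:Metric_IntrinsicBottleneck}. For any partition $0=t_0\le t_1\le\cdots\le t_N=1$,
\[
\sum_{i=0}^{N-1} d_I^L(T^{t_i}, T^{t_{i+1}}) \le \sum_{i=0}^{N-1} (t_{i+1}-t_i)\,\|M-M'\|_\infty = \|M-M'\|_\infty,
\]
hence $L_{d_I^L}(\gamma)\le \|M-M'\|_\infty$. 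On the other hand, by Definition~\ref{def:Metric_LabeledInterleaving}, $d_I^L(T,T') = \|\MM(T)-\MM(T')\|_\infty = \|M-M'\|_\infty$, and the length of any path joining $T$ to $T'$ is at least $d_I^L(T,T')$. Therefore $L_{d_I^L}(\gamma) = d_I^L(T,T')$, i.e., $\gamma$ is a geodesic. As $T, T'$ were arbitrary, $(\LMT, d_I^L)$ is a geodesic space, so $d_I^L$ is strictly intrinsic.

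I do not expect a serious obstacle here; the only points requiring care are that the straight-line path between the two ultra matrices stays inside $\VM$ (so $\TT$ applies) and that $\TT$ followed by $\MM$ returns the original trees at the endpoints, both of which are immediate from results already established. An alternative, less explicit route would be to combine completeness of $(\LMT,d_I^L)$ (Lemma~\ref{lem:complete}) with the midpoint criterion (the quoted Theorem~2.1.16), taking $T^{1/2}$ as a midpoint of $T$ and $T'$; but the direct argument above has the advantage of exhibiting the geodesic explicitly.
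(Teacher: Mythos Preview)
Your proof is correct and follows essentially the same approach as the paper: both apply Lemma~\ref{lem:ultrammt-Lip} to the linear interpolation $M^\lambda$ to obtain $d_I^L(T^\lambda,T^\mu)\le |\mu-\lambda|\,\|M-M'\|_\infty$, and then conclude that the path has length exactly $d_I^L(T,T')$. You are somewhat more explicit than the paper about verifying the endpoints and well-definedness of the path, but the argument is the same.
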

\begin{proof}
Let $\delta$ denote the distance $d_I^L(T, T') = \|M-M'\|_\infty$. 
For any $0\leq \lambda\leq \lambda'\leq 1$, the linearly interpolating matrices $M^\lambda, M^{\lambda'}$ satisfy $\|M^\lambda-M^{\lambda'}\|_\infty \leq (\lambda'-\lambda)\,\delta$. 
Hence, by \cref{lem:ultrammt-Lip}, we have $d_I^L(T^\lambda, T^{\lambda'}) \leq (\lambda'-\lambda)\,\delta$. 
Since this is true for all $0\leq\lambda\leq\lambda'\leq 1$, the triangle inequality implies that the family $\{T^\lambda\}_{\lambda\in[0,1]}$ defines a geodesic between $T$ and $T'$.
\end{proof}

See the example of \cref{fig:AverageMergeTreeExample}. 
Setting $\lambda = 1/2$, $M^\lambda$ is the matrix (labeled $M$) shown in the middle green circle, and $T^\lambda$ (labeled $\TT(M)$) is the tree shown at the far right. 
\update{\cref{cor:labelintrinsic} discusses the geodesics in the space of labeled merge trees. A metric space in general may have no geodesics; thus~\cref{cor:labelintrinsic} provides an additional property for the space of interest. Furthermore, a geodesic can be used to perform shape morphing between a pair of merge trees (see~\cite{YanWangMunch2020})}. 

\begin{figure}[!ht]
\centering
\includegraphics[width = .5\textwidth]{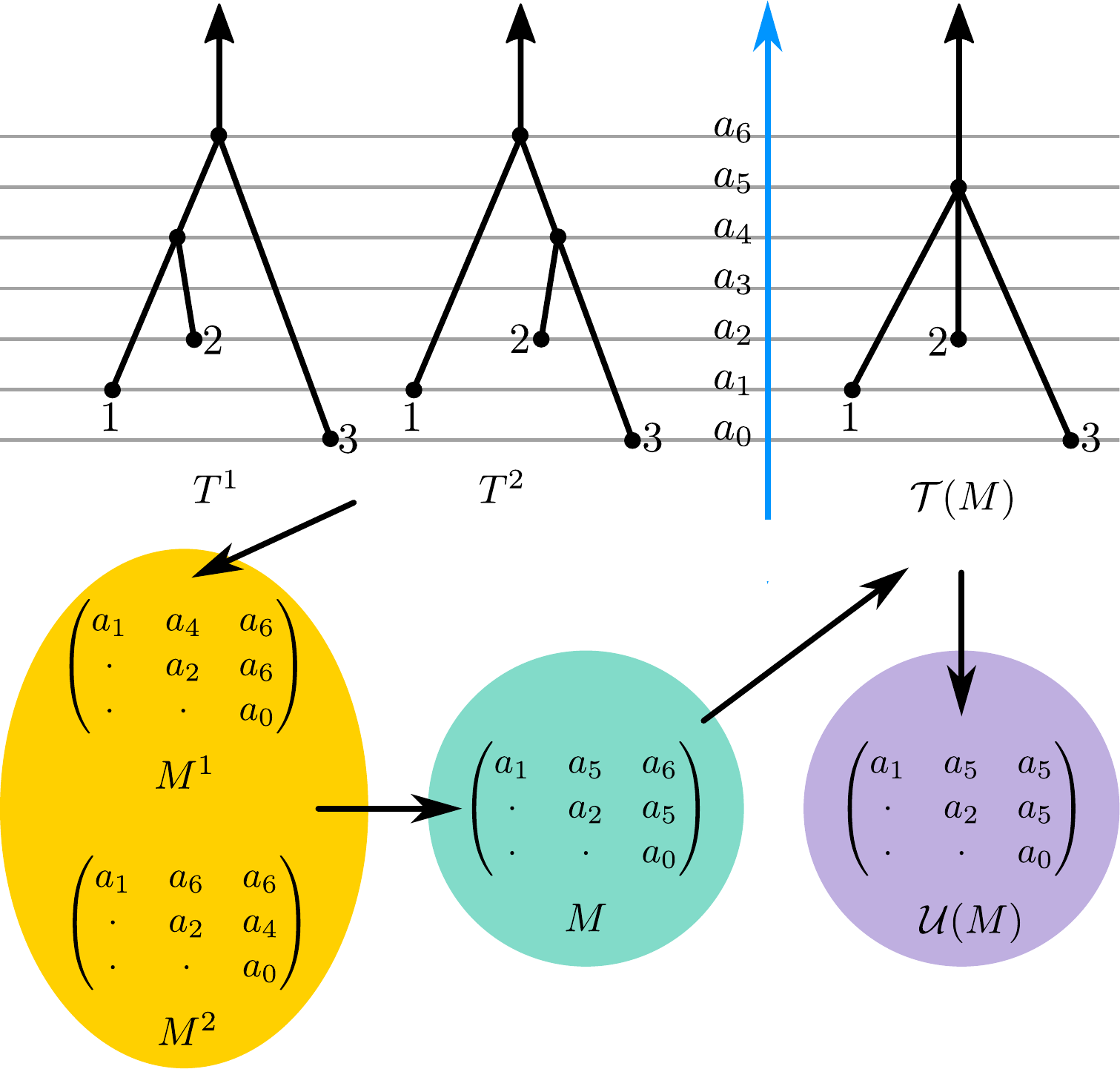}
\caption{An example of the averaging process for labeled merge trees. 
$T^1$ and $T^2$ are labeled merge trees with induced matrices $M^1$ and $M^2$.
$M$ is the pointwise average of $M^1$ and $M^2$, but is not an ultra matrix. 
The labeled merge tree $\TT(M)$ is shown, whose induced matrix is the ultra matrix $\UU(M)$.}
\label{fig:AverageMergeTreeExample}
\end{figure}

\subsection{1-centers in $\LMT$}
\label{subsec:1centers}

\update{Our $1$-center merge tree originates from the notion of a metric $k$-center in graph theory. 
Given $m$ number of cities, one aims to build $k$ facilities that minimize the maximum distance between a city to a facility.}
For $k=1$, a metric 1-center of a finite set of labeled merge trees is one that minimizes the maximum distance to any other tree in the set.  
A metric 1-center may or may not be unique. 
\begin{definition}
Given a metric space $(X, d)$, a \textbf{$1$-center} $c \in X$ of a finite point set $P = \{p_1, \cdots, p_m\} \subset X$ is 
\begin{equation*}
c\; \update{ \in} \argmin_{x \in X} \max_{p \in P} d(x, p). 
\end{equation*}
That is, $c$ is a center of the minimum enclosing ball of $P$.
\end{definition}

\update{
Here, we use the set notation $\in$ to indicate that $c$ may not be unique. 
In the case of a finite collection of numbers $\chi$ in $\R$, the 1-center is simply the midpoint of the enclosing interval, $(\max(\chi) + \min(\chi))/2$.
Now suppose we are given a collection of matrices $\{M^1,\cdots,M^N\}$. 
Let $M_{mid}$ denote the matrix consisting of the entry-wise 1-center of the matrices, i.e., $M[i][j]$ is the midpoint of the enclosing interval of numbers $\{M^1[i][j], M^2[i][j], ..., M^N[i][j]\}$. It is easy to see that $\Mmid$ is a 1-center for these matrices in the space of all matrices equipped with the $\ell^\infty$ norm. 
A similar statement holds for a collection of valid matrices, and we include its simple proof for completeness.  
\begin{claim}
Let $M^1,\cdots,M^N$ be valid $n\times n$ matrices, and $\Mmid$ be the matrix consisting of the entry-wise 1-center of these matrices. Then $\Mmid$ must be valid as well and $\Mmid$ is a 1-center of $\{M^1, \ldots, M^N\}$ in the space of valid matrices equipped with the $\ell^\infty$ norm. 
\label{claim:validMat1center}
\end{claim}
}

\begin{proof}
\update{
In what follows, all spaces of matrices are equipped with the $\ell^\infty$ norm. 
Since the space of valid matrices is a subspace of the space of all matrices, it follows that $\Mmid$ is a 1-center of $\{M^1, \ldots, M^N\}$ in the space of all matrices. Hence to prove the claim we only need to show that $\Mmid$ is a valid matrix. In other words, $\Mmid[i][i] \le \Mmid[i][j]$ for any $i,j \in [n]$. To see why this holds, note that for any $i,j \in [n]$,
$$
\Mmid[i][i] = \frac{\max_k(M_{ii}^k) + \min_k(M_{ii}^k)}{2}\leq \frac{\max_k(M_{ij}^k) + \min_k(M_{ij}^k)}{2} = \Mmid[i][j].$$
The claim thus follows.}
\end{proof}

\update{
$\Mmid$ as a 1-center of valid matrices is, by itself, a valid matrix, but may not be an ultra matrix, so we can replace it by its labeled merge tree (following the procedure described by~\cref{def:lmt-valid}) and take its corresponding ultra matrix, thus turning it back to an ultra matrix.} 

\update{
The main result of this section is an algorithm to compute the 1-center of a collection of labeled merge trees under the labeled interleaving distance $d_I^L$. In particular, suppose we are given a set of labeled merge trees $\{T^1, \ldots, T^N\}$, whose corresponding induced matrices $\{\M^1,\cdots,\M^N\}$ are both valid and ultra. 
We compute a 1-center valid matrix $\Mmid$ of $\{\M^1,\cdots,\M^N\}$ following~\cref{claim:validMat1center}, and convert it to a labeled merge tree, denoted $T^*$. 
Then $T^*$ is a 1-center of the labeled merge trees, see \cref{fig:AverageMergeTreeExample} for a simple example. 
The correctness of this procedure is established in the following~\cref{prop:1center_v2}. 
}

\begin{prop}[LMT 1-Center]
\label{prop:1center_v2}
Let $\{T^1,\cdots,T^N\}$ be a set of labeled merge trees, which gives rise to a set of valid and ultra matrices $\{\M^1,\cdots,\M^N\}$. 
Let $T^*$ be a merge tree constructed as above. Then $T^*$ is a 1-center of $\{T^1,\ldots, T^N\}$.
Furthermore, let $U^* = \MM(T^*) = \MM \TT(\Mmid)$ be the ultra matrix corresponding to $T^*$. Then $U^*$ is a 1-center of the set of ultra matrices $\{M^1,\ldots, M^N\}$.  
\end{prop}

\begin{proof}
\update{Recall that (the valid matrix) $\Mmid$ is the 1-center of ultra matrices $\{M^1,\ldots,M^N\}$ in the space of valid matrices following~\cref{claim:validMat1center}. Set $\delta = \ds\max_{i} \|\Mmid - M^i\|_\infty$. 
Then $d_I^L(T,T^i) \leq \|\Mmid- M^i\|_\infty$ by \cref{lem:ultrammt-Lip}. It then follows that
\begin{equation*}
    \max_i d_I^L(T^*,T^i) \leq \max_i \|\Mmid - M^i\|_\infty \leq \delta.
\end{equation*}
Thus $\{T_i\}_{i=1}^N$ is contained in a ball of radius $\delta$ centered at $T^*$. 
}

\update{We now show that this is in fact a \emph{minimum enclosing ball} of $\{T^1,\ldots,T^N\}$ in the space of labeled merge trees, which would then imply that $T^*$ is a 1-center for these merge trees.
Specifically, assume there exists a $\widetilde T$ such that $\ds\max_i d_I^L(\widetilde T, T^i) < \delta$. 
Set $\widetilde U = \MM(\widetilde T)$. 
Then for any $i$, 
\begin{equation*}
    \|\widetilde U - M^i\|_\infty 
    = d_I^L(\TT(\widetilde U), \TT(M^i)) 
    = d_I^L (\widetilde T, T^i) < \delta.
\end{equation*}
Hence $\widetilde U$, as a valid matrix, gives rise to a smaller $\max_i \|\widetilde U - M^i\|$, which contradicts the assumption that $\Mmid$ is a 1-center within the space of valid matrices (i.e, $\Mmid = \argmin_M \max_i \| M - M^i\|$). Hence such a $\widetilde{T}$ cannot exist, and $T^*$ is a 1-center for $\{T^1, \cdots, T^N\}$. By the relation between distance for ultra matrices and for their corresponding labeled merge trees, $U^*=\MM(T^*)$ is a 1-center for $\{M^1, \cdots, M^N\}$, as well. 
}
\end{proof}
\update{\paragraph{Remark.} As a corollary of the above result, if we are given a collection of ultra matrices $\{M^1, \ldots, M^N\}$, then $U^* = \MM \TT (\Mmid)$ is a 1-center for them in the space of ultra matrices, where $\Mmid$ as defined earlier is the matrix consisting of the entry-wise 1-center of the input ultra matrices and $\Mmid$ is itself not necessarily a ultra matrix. 
Computing 1-centers for ultrametrics has been  explored in the literature. While in general, this problem is NP-hard, for the case when we consider the $\ell^\infty$-norm on the space of ultrametrics (which is the same as our setting), it is known that there is a simple algorithm to compute it \cite{ChepoiFichet}. However, our approach above is completely different from the previous approach in \cite{ChepoiFichet}, and has a different interpretation as well. 
}


\section{Interleaving Distances for Unlabeled Merge Trees}
\label{sec:unlabeled}

Moving to the unlabeled setting, we establish the existence of a certain labeling for a pair of merge trees that allows us to show that the interleaving distance for unlabeled merge trees is intrinsic.

\begin{thm}
\label{thm:LabeledVsUnlabeledInterleaving}
Given two merge trees $(T,f)$ and $(T',f')$, let $L$ and $L'$ be the respective leaf sets.
Then 
\begin{equation}
\label{eq:LabeledVsUnlabeledInterleaving}
d_I((T,f),(T',f')) = \inf_{\pi,\pi'} d_I^L((T, f,\pi) ,(T', f',\pi'))
\end{equation}
where the infimum is taken over all finite labelings of the two given merge trees, $\pi$ and $\pi'$,  using at most $|L| + |L'|$ labels. 
\end{thm}

Prior to proving the theorem, we will investigate the following construction of a labeling when given a $\delta$-good map. 
First, note that given two labeled merge trees $(T, f,\pi)$ and $(T', f',\pi')$, where $\pi: [n] \to V(T)$ and $\pi':[n]\to V(T')$, the labeling information can be equivalently stored as an ordered collection of pairs $\Pi = \{ (\pi(i), \pi'(i))  \mid i \in [n] \} \subseteq V(T) \times V(T')$. 
Since the order of the labels does not matter for this particular application, we will build $\Pi$ iteratively and assign the integers at the end. 

Let $L$ and $L'$ denote the leaf sets for $T$ and $T'$, respectively. 
Assume we are given a $\delta$-good map $\alpha$ as described in \cref{def:Metric_Interleaving}.
While this map is defined on the underlying metric trees, note that we can subdivide the trees so that $\alpha(v)$ is a vertex in $T'$ for any vertex in $T$, and further that every point in the set $\alpha\inv(w)$ is a vertex in $T$ if $w$ is a vertex in $T'$.

Then, we construct the labeling $\Pi$ as follows. 
\begin{description}\denselist
\item[(S-1)] Fix some $v \in L$, and let $w = \alpha(v)$. 
Then for every $u \in \alpha \inv(w)$, add $(u,w)$ to $\Pi$. 
Repeat this for every vertex in $L$. 
\item[(S-2)] For any leaf node $w\in L' \setminus \Image(\alpha)$, let $x$ be its lowest ancestor contained in  $\Image(\alpha)$. 
Let $u \in \alpha^{-1}(x)$ be an arbitrary preimage of $x$ from $|T|$. 
Add $(u,w)$ to $\Pi$. 
Repeat for all leaves  in $L'$. 
\item [(S-3)] Fix an ordering on the pairs in $\Pi = \{(u_i,w_i) \mid i \in [n] \}$ and define $\pi(i) = u_i \in T$ and $\pi'(i) = w_i \in T'$. 
\end{description}

Observe that since the preimage of any leaf node $w\in L' \cap \Image(\alpha)$ must be some vertex (or vertices) in $L$, any $w\in L' \cap \Image(\alpha)$ will be paired with some $u\in L$ by the process in (S-1), so this procedure does not miss any leaves in $T'$.
See \cref{fig:LabelsInducedByMap} for an example.

\begin{figure}[!ht]
    \centering
    \begin{minipage}{.34\textwidth}
    \centering 
    $\MM(T,f,\pi) = $ \\
    	$
    	\begin{pmatrix}
	a_1   & a_4   & a_7   & a_7   & a_1   & a_4  & a_7 \\
	\cdot & a_2   & a_7   & a_7   & a_4   & a_2  & a_7\\
	\cdot & \cdot & a_4   & a_5   & a_7   & a_7  & {\color{red}a_5}\\
	\cdot & \cdot & \cdot & a_0   & a_7   & a_7  & {\color{red}a_4}\\
	\cdot & \cdot & \cdot & \cdot & a_1   & a_4  & a_7\\
	\cdot & \cdot & \cdot & \cdot & \cdot & a_2  & a_7\\
	\cdot & \cdot & \cdot & \cdot & \cdot & \cdot & a_4\\
	\end{pmatrix}$
    \end{minipage}
    \begin{minipage}{.3\textwidth}
    \includegraphics[width = \textwidth] {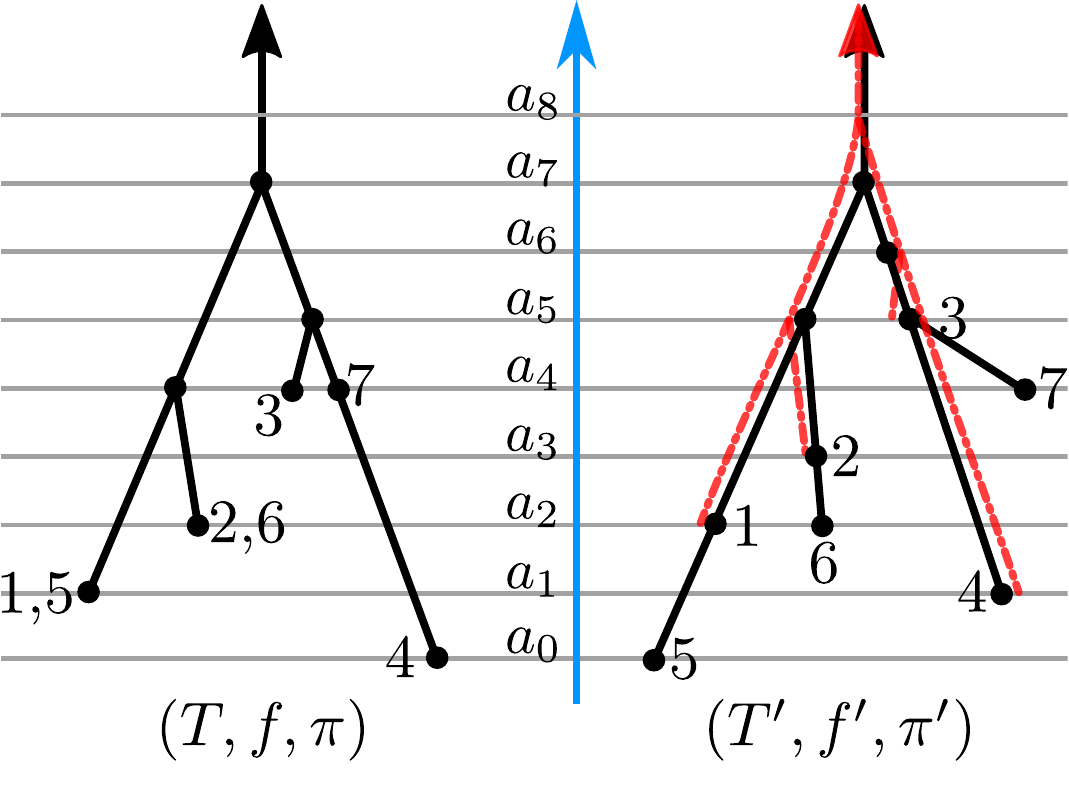}
    \end{minipage}
        \begin{minipage}{.34\textwidth}
    \centering 
    $\MM(T',f',\pi') = $ \\
    	$
    	\begin{pmatrix}
	a_2   & a_5   & a_7   & a_7   & a_2   & a_5  & a_7 \\
	\cdot & a_3   & a_7   & a_7   & a_5   & a_3  & a_7\\
	\cdot & \cdot & a_5   & a_5   & a_7   & a_7  & a_5\\
	\cdot & \cdot & \cdot & a_1   & a_7   & a_7  & a_5\\
	\cdot & \cdot & \cdot & \cdot & a_0   & a_5  & a_7\\
	\cdot & \cdot & \cdot & \cdot & \cdot & a_2  & a_7\\
	\cdot & \cdot & \cdot & \cdot & \cdot & \cdot & a_4\\
	\end{pmatrix}$
    \end{minipage}
    \caption{Given $\alpha: (T,f) \to (T',f')$, this is an example of the labeling induced by the procedure discussed after \cref{thm:LabeledVsUnlabeledInterleaving}.
    The image of the map $\alpha$ is given by the red dashed lines, and $\alpha$ is $\delta$-good for $\delta = a_{i+1}-a_i$.   
    Labels 1-4 were generated in (S-1), the rest in (S-2). 
    Note that there were two options for the location of label 7 in $T$.
    The other choice would be the same as the vertex labeled 3, and would only change the red entries in $\MM(T,f,\pi)$.}
    \label{fig:LabelsInducedByMap}
\end{figure}

To use this construction to prove \cref{thm:LabeledVsUnlabeledInterleaving}, we will use the following two lemmas. 

\begin{lem}
\label{lem:funcperturb}
For any $(u,w) \in \Pi$, $|f(u) - f'(w)| \leq \delta$. 
\end{lem}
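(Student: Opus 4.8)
The plan is to split into the two cases according to how the pair $(u,w)$ was added to $\Pi$. Suppose first that $(u,w)$ was generated in step (S-1). Then $w = \alpha(v)$ for some leaf $v \in L$, and $u$ is some element of $\alpha\inv(w)$. By property (i) of a $\delta$-good map, $f'(\alpha(v)) - f(v) = \delta$, and more generally $f'(w) - f(u) = f'(\alpha(u)) - f(u) = \delta$ for \emph{every} $u \in \alpha\inv(w)$, since each such $u$ satisfies $\alpha(u) = w$. Hence in this case we actually get the stronger statement $f'(w) - f(u) = \delta$, so certainly $|f(u) - f'(w)| \le \delta$.

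Now suppose $(u,w)$ was generated in step (S-2). Then $w \in L' \setminus \Image(\alpha)$, the vertex $x$ is its lowest ancestor lying in $\Image(\alpha)$, and $u \in \alpha\inv(x)$ is an arbitrary preimage. By property (i), $f'(x) - f(u) = \delta$. It remains to control $f'(x) - f'(w)$. Since $x \succ w$ (as $x$ is an ancestor of $w$), we have $f'(x) \ge f'(w)$, so $f'(x) - f'(w) \ge 0$; I need to bound this quantity above by $\delta$ as well, since then
\[
f(u) - f'(w) = \bigl(f(u) - f'(x)\bigr) + \bigl(f'(x) - f'(w)\bigr) = -\delta + \bigl(f'(x) - f'(w)\bigr),
\]
which lies in $[-\delta, 0]$, giving $|f(u) - f'(w)| \le \delta$.

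The main obstacle is thus the bound $f'(x) - f'(w) \le \delta$, i.e. controlling how far below $x$ the missing leaf $w$ can sit. The idea is to use property (iii): every point $w' \notin \Image(\alpha)$ has $depth(w') \le 2\delta$. Consider the child $y$ of $x$ along the path from $x$ down to $w$; by minimality of $x$ (it is the \emph{lowest} ancestor of $w$ in $\Image(\alpha)$, and $w \notin \Image(\alpha)$), this child $y$ is not in $\Image(\alpha)$, and $w$ lies in the subtree rooted at $y$. Hence $f'(x) - f'(w) \le f'(y) - f'(w) + (f'(x) - f'(y))$; the first term is at most $depth(y) \le 2\delta$ and... that alone only gives $2\delta$, not $\delta$. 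I would instead argue more carefully: since $w$ is a \emph{leaf} not in the image and $x$ is the lowest image-ancestor, the whole half-open subtree between $x$ (exclusive) and $w$ consists of points not in $\Image(\alpha)$, and the appropriate application of (iii) to the point just below $x$ bounds the depth of that entire hanging subtree by $2\delta$, but since $w$ must be reached from $x$ and $f'$ decreases, I expect the correct statement to come from applying (iii) at $x$'s non-image child together with property (ii) or a leaf-specific refinement — working out exactly which combination yields the clean bound $\delta$ (rather than $2\delta$) is the crux, and I would look to the $\delta$-good map axioms, possibly using that $x \in \Image(\alpha)$ forces $f'(x)$ to be a "merge" value in $T'$ that is matched by a subtree in $T$ of depth $\le 2\delta$ via property (ii).
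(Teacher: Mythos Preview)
Your approach is the same as the paper's, and case (S-1) is fine. The issue in case (S-2) is purely an arithmetic slip. You correctly derive
\[
f(u) - f'(w) \;=\; -\delta + \bigl(f'(x) - f'(w)\bigr),
\]
but then assert that you need this quantity to lie in $[-\delta, 0]$, and hence that you need $f'(x) - f'(w) \le \delta$. In fact the lemma only asks for $|f(u) - f'(w)| \le \delta$, i.e.\ $f(u) - f'(w) \in [-\delta, \delta]$, which requires only $0 \le f'(x) - f'(w) \le 2\delta$. And $2\delta$ is exactly the bound property~(iii) delivers: any point $y$ on the open path from $x$ down to $w$ lies outside $\Image(\alpha)$ by the minimality of $x$, and the leaf $w$ sits in the subtree rooted at $y$, so $f'(y) - f'(w) \le \depth(y) \le 2\delta$; letting $y \to x$ gives $f'(x) - f'(w) \le 2\delta$. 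This is precisely the bound you obtained and then dismissed as ``only $2\delta$, not $\delta$.'' No appeal to property~(ii) or any sharper estimate is needed; the paper's proof uses exactly this $2\delta$ bound from~(iii) and the same arithmetic.
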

\begin{proof}
If $(u,w)$ is generated from (S-1) above, then the lemma holds by property (i) in the definition of the $\delta$-good map $\alpha$ (see~\cref{def:Metric_Interleaving}). 
If $(u,w)$ is generated from (S-2), then the lemma follows from property (iii) of the $\delta$-good map $\alpha$. 
Indeed, let $x$ be the lowest ancestor of $w$ contained in  $\Image(\alpha)$, so that $\alpha(u) = x$.  
Then $0 \le f'(x) - f'(w) \le 2\delta$ and $f'(x) - f(u) =\delta$, implying that 
$|f'(w) - f(u)| \le \delta$. 
\end{proof}

\begin{lem}
\label{lem:distperturb}
For any $(u_1, w_1), (u_2, w_2) \in \Pi$, 
$|f(\LCA(u_1, u_2)) - f'(\LCA(w_1, w_2))| \le \delta$. 
\end{lem}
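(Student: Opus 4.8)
The plan is to bound $f(\LCA(u_1,u_2))-f'(\LCA(w_1,w_2))$ and $f'(\LCA(w_1,w_2))-f(\LCA(u_1,u_2))$ separately. Write $a=\LCA(u_1,u_2)\in T$, $b=\LCA(w_1,w_2)\in T'$, and $x_i=\alpha(u_i)\in T'$ for $i=1,2$. Property (i) of a $\delta$-good map gives $f'(x_i)=f(u_i)+\delta$, and the construction of $\Pi$ gives $x_i\succeq w_i$ together with $f'(x_i)-f'(w_i)\le 2\delta$ in every case: this gap is $0$ for a label produced in (S-1), where $x_i=w_i$, and at most $2\delta$ for a label produced in (S-2), by applying property (iii) along the segment from $w_i$ up to its lowest ancestor $x_i$ in $\Image(\alpha)$, exactly as in the proof of Lemma~\ref{lem:funcperturb}. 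I will also use two facts about $E:=\Image(\alpha)$. First, since $E$ is a continuous image of the connected tree $|T|$, it is a connected --- hence \emph{convex} --- subtree of $|T'|$: it contains the unique arc between any two of its points, and in particular is closed under $\LCA$. Second, the $f$-increasing path from any $p\in T$ to the root of $T$ is carried by $\alpha$ onto the $f'$-increasing path from $\alpha(p)$ to the root of $T'$.

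The inequality $f'(b)-f(a)\le\delta$ is immediate: since $u_i\preceq a$, Lemma~\ref{lem:GoodMapProperty} gives $x_i\preceq\alpha(a)$, so $\alpha(a)$ is a common ancestor of $x_1$ and $x_2$, hence of $w_1\preceq x_1$ and $w_2\preceq x_2$; therefore $\alpha(a)\succeq b$ and $f(a)+\delta=f'(\alpha(a))\ge f'(b)$. For the reverse inequality, the plan is to show that $u_1$ and $u_2$ lie in one connected component of $f\inv(-\infty,f'(b)+\delta]$, which forces $f(a)\le f'(b)+\delta$. The tool is a transfer claim: \emph{if $q_1,q_2\in E$ lie in one component of $(f')\inv(-\infty,h]$, then any $p_1\in\alpha\inv(q_1)$ and $p_2\in\alpha\inv(q_2)$ lie in one component of $f\inv(-\infty,h+\delta]$.} To prove it, set $c=\LCA(q_1,q_2)$; convexity places the arc $q_1\to c\to q_2$ inside $E$ at heights $\le h$, so $f'(c)\le h$. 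The $f$-increasing path from $p_i$ to the root of $T$ maps onto the $f'$-increasing path from $q_i$ to the root of $T'$, which passes through $c$, so it contains a point $\tilde p_i\succeq p_i$ with $\alpha(\tilde p_i)=c$ and $f(\tilde p_i)=f'(c)-\delta\le h-\delta$, connecting $p_i$ to $\tilde p_i$ below height $h-\delta$. Finally $\tilde p_1,\tilde p_2\in\alpha\inv(c)$, so with $c'=\LCA(\alpha\inv(c))$, property (ii) gives $f(c')\le f(\tilde p_i)+2\delta\le h+\delta$, and $\tilde p_1,\tilde p_2\preceq c'$ connects them below height $h+\delta$; chaining the three pieces proves the claim.

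It remains to apply the transfer claim with $q_i=x_i$, $p_i=u_i$, $h=f'(b)$, which works as soon as $x_1,x_2$ lie in one component of $(f')\inv(-\infty,f'(b)]$, i.e.\ $\LCA(x_1,x_2)=b$. If $x_1=x_2=:x$ this can fail, but here $b$ is a common ancestor of $w_1,w_2$ lying weakly below $x$ with $f'(x)-f'(b)\le 2\delta$ (property (iii) along the segment from $b$ up to $x$); reading the choice of preimage in (S-2) as fixed once per vertex of $T'$, one gets either $u_1=u_2$, whence $a=u_1$ and $f(a)=f'(x)-\delta\in[f'(b)-\delta,\,f'(b)+\delta]$, or else one of the two labels came from (S-1), forcing $b=x$, and then $a\preceq\LCA(\alpha\inv(x))$ together with property (ii) gives $f'(x)-\delta\le f(a)\le f'(x)+\delta=f'(b)+\delta$. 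If $x_1\neq x_2$, I claim $b\in E$: otherwise the connected set $E$ would lie either entirely within a single branch strictly below $b$ (forcing $\LCA(x_1,x_2)\prec b$) or entirely off the closed subtree rooted at $b$ (forcing each $x_i\succ b$, hence $x_1=x_2$, since both would then be image-ancestors of both $w_i$ and the lowest such is unique), and both alternatives contradict $\LCA(x_1,x_2)\succeq b$ with $x_1\neq x_2$. Once $b\in E$, each $x_i$ --- equal to $w_i\preceq b$ for an (S-1)-label, or the lowest image-ancestor of $w_i$ and hence $\preceq b$ because $b$ is itself an image-ancestor of $w_i$ --- satisfies $x_i\preceq b$, so $\LCA(x_1,x_2)=b$ and the transfer claim finishes the argument.

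The hard part is obtaining the sharp constant $\delta$, rather than $2\delta$ or $3\delta$, in the second inequality. One cannot transfer connectivity along a path between $w_1$ and $w_2$, since these points may lie outside $\Image(\alpha)$, while transferring along a path between $x_1$ and $x_2$ loses up to $2\delta$ in general because $x_i$ may sit that far above $w_i$. The loss is recovered only through the two structural facts about $\Image(\alpha)$ --- its convexity, and the dichotomy that forces the merge vertex $b$ to lie in $\Image(\alpha)$ whenever $x_1\neq x_2$ --- so verifying these (and correctly handling the collapsed case $x_1=x_2$) is the real content of the proof.
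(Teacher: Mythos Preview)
Your bound $f'(b)-f(a)\le\delta$ is identical to the paper's: both note that $\alpha(a)$ is a common ancestor of the $x_i=\alpha(u_i)$ and hence of the $w_i$. For the reverse inequality the two arguments diverge. The paper first asserts, without case distinction, that $\LCA(x_1,x_2)=b$ (in their notation $w_0'=w_0$, with $w_i'=x_i$), and then works directly with $\alpha^{-1}(b)$ and property~(ii). You instead package the preimage step into a connectivity ``transfer claim'' and split into the cases $x_1\neq x_2$ (where you show $b\in\Image(\alpha)$, recover $\LCA(x_1,x_2)=b$, and apply the claim) and $x_1=x_2$ (handled directly via property~(iii) and the fixed-preimage convention).

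The crucial point is your stipulation that the preimage in~(S-2) be chosen once per vertex of $T'$. This is not cosmetic: without it the lemma is false. Take $T$ with two leaves $l_1,l_2$ at height~$0$ merging at height~$2\delta$, and $T'$ with two leaves $w_1,w_2$ at height~$-\delta$ merging at $b$ of height~$-\delta+\epsilon$. The map $\alpha$ shifting heights by~$\delta$ is $\delta$-good, its image is the part of $T'$ at height $\ge\delta$, and both $w_i$ share the lowest image ancestor $x$ at height~$\delta$ with $\alpha^{-1}(x)=\{l_1,l_2\}$. If (S-2) pairs $w_1$ with $l_1$ and $w_2$ with $l_2$, then $f(\LCA(l_1,l_2))-f'(b)=3\delta-\epsilon>\delta$. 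The paper's claimed identity $w_0=w_0'$ fails in exactly this situation (here $b\prec x$ while $w_0'=\LCA(x,x)=x$, and the ``two paths'' in their contradiction argument coincide), so its subsequent use of $\alpha^{-1}(w_0)$ breaks down as well. Your separate treatment of $x_1=x_2$, combined with the fixed-preimage convention, is precisely what repairs this; since one is free to build $\Pi$ with that convention, this is harmless for the use of the lemma in Proposition~\ref{prop:LabeledVsUnlabeledInterleaving}.
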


\begin{proof}
Assume we are given $\alpha$, a $\delta$-good map. 
If $(u_i,w_i)$ is generated from (S-1), set $w_i' = w_i$. 
If $(u_i, w_i)$ is generated via (S-2), then let $w_i'$ be the lowest ancestor of $w_i$ in $\Image(\alpha)$. 
In both cases, we have that $\alpha(u_i) = w_i'$ and $w_i \preceq w_i'$. 

Set $u_0 = \LCA(u_1, u_2)$, $w_0 = \LCA(w_1, w_2)$  and $w_0' = \LCA(w_1', w_2')$. 
We will first show that $w_0 = w_0'$.
If both pairs come from (S-1), then  $w_i = w_i'$ and the claim is obvious. 
So, assume that at least one, say $(u_1,w_1)$, comes from (S-2) and thus $w_1 \neq w_1'$. 
As $w_i \preceq w_i' \preceq w_0'$ for each $i$, the least common ancestor property implies $w_0 \preceq w_0'$. 
Seeking a contradiction, assume that $w_0$ is not a common ancestor of both $w_i'$; without loss of generality, say $w_0$ is not an ancestor of $w_1'$. 
Let $z = \LCA(w_0,w_1',w_2')$. 
Then there are two paths in $T'$ from $w_1$ to $z$: one through $w_0$ and one through $w_1'$. 
This contradicts the tree assumption of $T'$.
Therefore, $w_0$ is a common ancestor of $w_i'$, implying $w_0' \preceq w_0$, and so $w_0 = w_0'$.

We will now prove the main claim, namely, that $|f(u_0) - f'(w_0)| \le \delta$.
To see that this is the case, assume that the claim does not hold; that is, either $f(u_0)-f'(w_0) > \delta$ or $f'(w_0) - f(u_0) > \delta$. 
Suppose first that $f'(w_0) - f(u_0) > \delta$, and consider $\alpha(u_0)$. 
Because $u_i \preceq u_0$ for $i=1,2$, by \cref{lem:GoodMapProperty} we must have that $w_i' = \alpha(u_i) \preceq \alpha(u_0)$ for $i=1,2$. 
However, then $\alpha(u_0)$ is an ancestor of both $w_1'$ and $w_2'$ with 
\begin{equation*}
    f'(\alpha(u_0)) = f(u_0) + \delta < f'(w_0),
\end{equation*}
contradicting the least common ancestor assumption of $w_0$.

Next, suppose $f(u_0) - f'(w_0) > \delta$ and 
consider $\alpha^{-1}(w_0)$. 
We claim that any point in $\alpha^{-1}(w_0)$ is a descendant of $u_0$; i.e.,~$v \preceq u_0$ for all $v \in \alpha\inv(w_0)$. 
Otherwise, we have that 
\begin{equation*}
    f(\LCA(\alpha^{-1}(w_0))) > f(u_0) > f'(w_0) + \delta = f(v) + 2\delta
\end{equation*}
for any $v \in \alpha \inv(w_0)$, contradicting property (ii) of \cref{def:Metric_Interleaving}. 
For $i=1,2$, let $v_i$ be the unique ancestor of $u_i$ with $f(v_i) = f'(w_0) - \delta$. 
By \cref{lem:GoodMapProperty}, since $\alpha(u_i) = w_i'$ and $w_0$ is the unique ancestor of $w_i'$ with $f'(w_0) = f(v_i) + \delta$, this implies that $\alpha(v_i) = w_0$.
That is, $v_i \in \alpha \inv(w_0)$. 
Further, $v_1 \neq v_2$. 
Otherwise if $v := v_1 = v_2$, then 
\begin{equation*}
    f(v) = f'(w_0) - \delta < f(u_0) - 2\delta < f(u_0)
\end{equation*}
and thus $v$ is a lower common ancestor of $u_1$ and $u_2$ than $u_0$, a contradiction. 
Hence, $\LCA(v_1,v_2) = u_0$. 
However, 
\begin{equation*}
    f(u_0) - f(v_i) 
    = f(u_0) - f'(w_0) + \delta > 2\delta.
\end{equation*}
This also contradicts property (ii) of \cref{def:Metric_Interleaving}, finishing the proof of \cref{lem:distperturb}.
\end{proof}

\begin{proof}[Proof of \cref{thm:LabeledVsUnlabeledInterleaving}]
Say we have a $\delta$-good map $\alpha$ for some $\delta \geq d_I((T,f),(T',f'))$. 
We construct the labelings $\pi,\pi'$ as described above. Then \cref{lem:funcperturb} and \cref{lem:distperturb} imply that \[d_I^L((T,f,\pi),(T',f',\pi')) \leq \delta.\] 
As this is true for any $\delta$, $\ds\inf_\Pi d_I^L((T,f,\pi),(T',f',\pi')) \leq d_I((T,f),(T',f'))$.

To show the other inequality, assume we are given any pair of labelings $\pi$, $\pi'$ and assume 
\begin{equation*}
d_I^L((T,f,\pi),(T',f',\pi')) = \delta.
\end{equation*}
We will construct the map $\alpha$ and show that it is $\delta$-good. 
For any $x \in |T|$, let $S_x \subseteq [n]$ be the labels in the subtree of $x$. 
Let $y_i$ be the unique ancestor of $\pi'(i) \in |T'|$ for $i \in S_x$ with $f'(y_i) = f(x) + \delta$. 
First, we note that $y_i = y_j$ for all $i,j \in S_x$. 
Indeed, let $M = \MM(T,f,\pi)$ and $M' = \MM(T',f',\pi')$. 
Then we know $M_{ij}' \leq \delta + M_{ij}$ and so 
\begin{equation*}
    f'(y_i) = f(x) + \delta \geq f(\LCA(\pi(S_x))) =  M_{ij}+ \delta \geq M_{ij}' = f'(\LCA(\pi'(S_x))).
\end{equation*}
Because every $y_i$ has function value greater than the lowest common ancestor of $\pi'(S_x)$, the tree property implies that all $y_i$ are equal. 
Thus, we can set $\alpha(x) = y_i$ for any $i \in S_x$ and it is well-defined. 

We need to ensure that the $\alpha$ constructed is $\delta$-good as given in \cref{def:Metric_Interleaving}.
The map satisfies property (i) by construction, so we move on to (ii). 
Let $w \in |T'| \cap \Image (\alpha)$ and set $x' = \LCA(\alpha\inv(w)) \in |T|$. 
Fix any $u \in \alpha \inv(w)$, and clearly $f(u) \leq f(x')$. 
Now $x'$ must be $\LCA(u,u')$ for some other $u' \in \alpha\inv(w)$. 
Let $i$ be a label in the subtree of $u$, and let $j$ be a label in the subtree of $u'$. 
This further implies that $x' = \LCA ( \pi(i), \pi(j))$. 
Set $w' = \LCA(\pi'(i),\pi'(j))$ and note that as $\pi'(i) \preceq w$ and $\pi'(j) \preceq w$, this implies that $w' \preceq w$. 
In particular, this means $f'(w') \leq f'(w)$.
Further, by assumption $ |f(x') - f'(w') | = |M_{ij} - M'_{ij}| \leq \delta$. 
Thus, 
\begin{equation*}
    f(x') - f(u) \leq 
    (f'(w) - f(u)) + (f(x') - f'(w')) + (f'(w') - f'(w)) \leq 2\delta
\end{equation*}
as the first part of the middle term is exactly $\delta$, the second is $\leq \delta$, and the last is negative, showing that $\alpha$ satisfies property (ii). 

Finally, we ensure property (iii). 
Let $w \in |T'| \setminus \Image(\alpha)$. 
Let $i$ be the label of any leaf in the subtree of $w$, and set $y = \alpha(\pi(i))$ to be the image of the vertex labeled $i$ in $T$. 
Then the tree property implies that $\pi'(i) \preceq w\preceq y$ and thus $f'(\pi'(i)) \leq f'(w) \leq f'(y)$.  
So, 
\begin{equation*}
    |f'(w) - f'(\pi'(i))|  \leq |f'(\pi'(i)) - f'(y) |
                            \leq |f'(\pi'(i)) - f(\pi(i))| +  \delta 
                            = |M_{ii} - M_{ii}'| +  \delta \leq 2\delta.
\end{equation*}
As this is true for every leaf in the subtree of $w$, $\depth(w) \leq 2\delta$ and so $\alpha$ satisfies property (iii).

Thus, we have that $ d_I((T,f),(T',f')) \leq d_I^L((T,f,\pi),(T',f',\pi'))$ for any given $\Pi$, completing the proof of the theorem.
\end{proof}

We can use the construction from the proof to state something stronger. 
Recall that we work with finite labeled and unlabeled merge trees throughout the paper. 

\begin{cor}
\label{cor:OptLabel}
There exist an $n$ and a pair of labelings $\pi,\pi'$ so that 
\begin{equation*}
    d_I((T,f), (T',f')) = d_I^L((T,f,\pi), (T',f',\pi')) ,
\end{equation*}
where $L$ (resp.~$L'$) is the set of leaves of $T$ (resp.~$T'$). Thus, the interleaving distance for finite merge trees is always achieved by a map $\alpha$.
\end{cor}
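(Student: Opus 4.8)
The plan is to leverage Proposition~\ref{prop:LabeledVsUnlabeledInterleaving} together with a compactness argument to show that the infimum over labelings is actually attained. First I would recall that by Proposition~\ref{prop:LabeledVsUnlabeledInterleaving}, $d_I((T,f),(T',f')) = \inf_{\pi,\pi'} d_I^L((T,f,\pi),(T',f',\pi'))$ where the infimum ranges over labelings using at most $N := |L| + |L'|$ labels. The key observation is that it suffices to show this infimum is achieved for this \emph{fixed} bound $N$ on the number of labels: we do not need to optimize over $n$, only over the (infinitely many) choices of where the $N$ labels land in $|T|$ and $|T'|$. So I would set $n = N$ once and for all and show that the map $(\pi,\pi') \mapsto d_I^L((T,f,\pi),(T',f',\pi'))$ attains its infimum over the space of such labelings.

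The main step is to identify the right compact parameter space. A labeling $\pi:[n]\to |T|$ with image a prescribed set of leaves (plus possibly interior points) can be encoded by choosing, for each label $i$, a point in the compact metric space $|T|$ (after the subdivisions allowing labels at interior vertices); similarly for $\pi'$ in $|T'|$. Thus the space of labelings is a closed subset of the compact space $|T|^n \times |T'|^n$ — closed because the constraint "$\pi$ surjects onto the leaves of $T$, $\pi'$ surjects onto the leaves of $T'$" is a closed condition (each leaf must be hit, and the set of tuples hitting a given leaf is closed since a single coordinate equals a fixed point). Actually, to be careful, one should phrase it as: among the finitely many combinatorial "types" of labelings (which label goes to which leaf / which edge), each type is parametrized by a compact product of closed edges, and there are finitely many types. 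Either way, the parameter space is compact. Then I would show that the function $(\pi,\pi') \mapsto \|\MM(T,f,\pi) - \MM(T',f',\pi')\|_\infty$ is continuous in these parameters: each entry $\MM(T,f,\pi)_{ij} = f(\LCA(\pi(i),\pi(j)))$ varies continuously as $\pi(i),\pi(j)$ move along $|T|$ (the height of the lowest common ancestor is a continuous function of the two points in a metric tree, since $f$ is continuous and the LCA map is continuous), and $\|\cdot\|_\infty$ of a finite matrix is continuous. A continuous function on a compact set attains its infimum, so there is an optimal pair $(\pi,\pi')$ realizing $d_I^L((T,f,\pi),(T',f',\pi')) = \inf = d_I((T,f),(T',f'))$.

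For the final sentence of the corollary, I would invoke the second half of the proof of Proposition~\ref{prop:LabeledVsUnlabeledInterleaving}: given the optimal labelings $\pi,\pi'$ with $d_I^L((T,f,\pi),(T',f',\pi')) = \delta := d_I((T,f),(T',f'))$, that proof explicitly constructs a $\delta$-good map $\alpha:(T,f)\to(T',f')$. Hence the interleaving distance is realized by an actual $\delta$-good map, not merely approached in the infimum. I expect the main obstacle to be the careful handling of the parameter space of labelings — in particular, making precise that allowing labels on subdivision (interior) points does not enlarge the problem beyond a finite union of compact cells, and checking that the leaf-surjectivity constraint carves out a compact (closed) subset rather than something that forces labels to the open interiors of edges. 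Once the compactness and continuity are set up cleanly, the conclusion is immediate.
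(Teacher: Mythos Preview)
Your approach is correct and genuinely different from the paper's. Rather than optimizing continuously over $|T|^N\times|T'|^N$, the paper pins the first $|L|$ labels to the leaves of $T$ and the last $|L'|$ to the leaves of $T'$, then restricts the remaining label positions to the \emph{finite} fibers $(f')^{-1}(f(\pi(i))+\delta)$ and $f^{-1}(f'(\pi'(i))+\delta)$ suggested by the (S-1)/(S-2) construction, and concludes by contradiction: if none of these finitely many candidates achieved $\delta$ there would be a uniform gap $\epsilon$, which would conflict with the labeling produced from a $(\delta+\epsilon/2)$-good map. Your compactness--continuity argument is cleaner and sidesteps this discretization entirely; the paper's version, in exchange, pinpoints exactly where optimal labels must sit. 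One genuine point you must patch: $|T|$ is not compact as written, since the root lies at height $\infty$ and its incident edge is an unbounded ray. The fix is easy --- a label on the root edge is never a leaf, so it can be replaced by a duplicate of an existing label without increasing $d_I^L$ or breaking surjectivity; hence the infimum is unchanged if you restrict all labels to the compact sublevel trees below the highest finite vertices. With that truncation your argument runs, and the identity $f(\LCA(u,v))=\tfrac12\bigl(d_T(u,v)+f(u)+f(v)\bigr)$ makes the continuity of each matrix entry transparent.
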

\begin{proof}
The right side of Equation \eqref{eq:LabeledVsUnlabeledInterleaving} in~\cref{thm:LabeledVsUnlabeledInterleaving} is taken over labelings using at most $N = |L| + |L'|$ labels, which is finite. (Here, $L'$ is the set of leaves of 
$T'$.)
Up to reordering, we can use the first $|L|$ numbers to label the leaves in $T$ and the last $|L'|$ to label the leaves in $T'$. 
All that remains to show is that there are finitely many possible locations to place the remaining labels in each tree. 
Indeed, if $d_I(T,T') = \delta$, then for each $i \in \{1,\cdots,|L|\}$, one has the option of placing $i$ at any point in  $(f')\inv(f(\pi(i)) + \delta) \subset T'$. 
Note that $|(f')\inv(f(\pi(i)) + \delta)| $ is finite.
Similarly, there are $|f\inv(f'(\pi'(i)) + \delta)|$ possible locations available for $i \in \{|L|+1, N\}$ to be placed in $T$.
For any fixed choice from this set for every $i$, let $M$ and $M'$ be the associated matrices for $T$ and $T'$, respectively.

The options are set up so that any choice of location for label $i$ in the opposite tree will automatically satisfy $|M_{ii} - M_{ii}'| = \delta$, so we need only ensure that some choice in each tree of these locations for every $i$ promises $|M_{ij} - M_{ij}'| \leq \delta$. 
For every choice of remaining labels, say there is some $i,j$ for which $|M_{ij} - M_{ij}'| > \delta$.
As we have finitely many options, there is an $\epsilon$ so that $|M_{ij} - M_{ij}'| > \delta + \epsilon$. 
However, there is certainly a $(\delta + \epsilon/2)$-good map $\alpha$ that does not take the labels into consideration, and we could then build the labeling as discussed in \cref{thm:LabeledVsUnlabeledInterleaving}, giving a contradiction. 
Thus, one of finitely many options achieves the left infimum of \eqref{eq:LabeledVsUnlabeledInterleaving}, and thus there is a $\delta$-good map $\alpha$ that also achieves the unlabeled distance. 
\end{proof}

We conclude this section by showing that the interleaving distance is intrinsic on the space of finite (unlabeled) merge trees. 
Recall from~\cref{def:Metric_IntrinsicBottleneck}, $\hat{d}$ denotes the intrinsic metric induced by a metric $d$. 

\begin{cor}
\label{cor:intrinsicinterleaving}
For the space of finite (unlabeled) merge trees, $d_I = \hat{d}_I$.
\end{cor}

\begin{proof}
Let $T$ and $T'$ be two merge trees, and set $\delta = d_I((T,f), (T',f'))$. 
Let $\pi,\pi'$ be optimal labelings such that $d_I((T,f), (T',f')) = d_I^L((T,f,\pi), (T',f',\pi'))  = \delta$, as established by \cref{cor:OptLabel}.

Now consider the space of labeled merge trees $\LMT$. 
By \cref{cor:labelintrinsic}, there exists a geodesic $\gamma: (T,f,\pi) \rightsquigarrow (T',f',\pi')$ in $\LMT$ such that the length $L_{d_I^L}(\gamma)= \delta$. 

Note that $\gamma$ can be projected to a path $\gamma'$ from $T$ to $T'$ in the space of (unlabeled) merge trees $\MT$ by simply ignoring the labeling. 
As $d_I((T,f), (T',f')) \le d_I^L((T,f,\pi_1), (T',f',\pi_2)) $ for any labelings $\pi_1$, $\pi_2$ between any two trees $T$ and $T'$, we have
\begin{align}
\label{eqn:lengthlabeled}
\hat{d}_I(T, T') \le L_{d_I}(\gamma') \le L_{d_I^L}(\gamma) = \delta. 
\end{align}
On the other hand, by definition of the intrinsic metric $\hat{d}_I$ induced by $d_I$, 
\begin{align}\label{eqn:lengthunlabeled}
\hat{d}_I(T, T') \ge d_I(T, T') = \delta.
\end{align}
Combining equations \eqref{eqn:lengthlabeled} and \eqref{eqn:lengthunlabeled}, we conclude that $\hat{d}_I(T, T') = d_I(T,T')$ for any two merge trees $T$ and $T'$.
\end{proof}


\section{Concluding Remarks and Discussion}
\label{sec:discussion}

In this paper, we investigated whether interleaving-type distances for (finite) labeled or unlabeled merge trees are intrinsic or not, and presented positive answers in both cases. In the case of labeled trees, the geodesic between two labeled merge trees can be characterized and computed easily, and we also showed how to compute the 1-center of a set of labeled merge trees. For unlabeled merge trees, however, computing the geodesic (even if just numerically estimating it) between two merge trees appears to be significantly harder, part of the reason being that it is NP-hard to approximate the interleaving distance between two merge trees, as pointed out in \cite{Touli2018}. 

On the other hand, a simpler and easier to compute object is the bottleneck distance $d_B(T_1, T_2)$ between two (unlabeled) merge trees. We conjecture that the intrinsic distance $\hat{d}_B$ induced by $d_B$ is in fact equivalent to $\hat{d}_I (= d_I)$. 

Another natural question is whether (some of the) results for merge trees in this paper can be extended to contour trees. 
As a first question, can we characterize and compute the midpoint (i.e., the contour tree representing the 1-center) for two labeled contour trees under either $\hat{d}_I$, $\hat{d}_B$, or $\hat{d}_{FD}$ (where we remind the reader that $d_{FD}$ denotes the functional distortion distance)? 
One idea is to compute the join and split trees of input contour trees, and compute the midpoint of the pair of join trees (resp., the pair of split trees). Note that each join or split tree can be viewed as a merge tree.
Next we need to use the common ancestor information in both trees to construct a midpoint for the two contour trees. This step could be subtle: in particular, it is known \cite{WWW14} that in general, given a descending (join) tree $T_J$ and an ascending (split) tree $T_S$ with consistent functions associated to them, there may not exist a contour tree (or even a graph) whose join and split trees are equal to $T_J$ and $T_S$, respectively. 
If such a contour tree exists, then it is unique, and the algorithm by Carr et al.~\cite{CarrSnoeyinkAxen2003} will compute this tree in near linear time. 

\update{
Finally, understanding theoretical properties of distances between merge trees has many practical implications. 
For instance, in scientific visualization, such distances may be employed to study ensemble data sets that arise from scientific simulations (e.g.,~\cite{YanWangMunch2020, PontVidalDelon2021}). 
Theorem~\ref{thm:LabeledVsUnlabeledInterleaving} suggests the potential development of computing interleaving distances between unlabeled merge trees.
Building on the work presented in this paper, Yan et al.~\cite{YanWangMunch2020} computed the structural average and geodesics of merge trees for uncertainty visualization. 
They explored various labeling strategies for computing interleaving distances between merge trees. 
Furthermore, Curry et al.~\cite{CurryHangMio2021} estimated the interleaving distance between unlabeled merge trees by searching for an optimal alignment between nodes in the trees with respect to a certain cost function; such estimation was used for classification and comparison of point cloud data.  
Moving beyond this paper, we envision a number of future applications in topological data analysis and visualization. 
}


\paragraph{Acknowledgments}
Our initial research collaboration began during the Dagstuhl Seminar 17292: Topology, Computation and Data Analysis in July 2017 (organized by Bei Wang, Hamish Carr, and Michael Kerber).
We thank all members of the breakout session on Reeb graphs for stimulating discussions.
We are grateful to the Institute for Computational and Experimental Research in Mathematics (ICERM) for supporting us through the Collaborate@ICERM program in August 2018.
EM was partially supported by National Science Foundation (NSF) through grants CMMI-1800466, DMS-1800446, and CCF-1907591. 
KT was supported by an ARC Discovery Early Career fellowship.
BW was partially supported by Department of Energy (DOE) DE-SC0021015,  NSF IIS-1513616 and DBI-1661375, as well as National Institutes of Health (NIH) R01EB022876.
YW was partially supported by NSF CCF-1740761 and DMS-1547357, as well as NIH  R01EB022899. 



\end{document}